\documentclass[aps,pra, two column,10pt,superscriptaddress,longbibliography,nofootinbib]{revtex4-2}
\usepackage{etex}
\usepackage{physics}
\usepackage{amsmath,amssymb,amsthm,amstext,mathrsfs}
\usepackage[colorlinks=true,citecolor=blue,urlcolor=blue]{hyperref}
\usepackage{comment}
\usepackage[pdftex]{graphicx}
\usepackage{enumerate}
\usepackage{times,txfonts}
\usepackage{braket}
\usepackage{svg}
\usepackage{color}
\usepackage{natbib}
\usepackage{amsmath,blkarray}
\usepackage{mathtools}
\usepackage{latexsym}
\usepackage{tabularx, booktabs}
\usepackage{graphics,epstopdf}
\usepackage{graphicx, lipsum}
\usepackage{float}
\usepackage{amsfonts,dsfont}
\usepackage{subcaption}
\usepackage{enumerate}
\usepackage{color,soul}

\newcommand{\be}{\begin{equation}}
\newcommand{\ee}{\end{equation}}
\newcommand{\ba}{\begin{eqnarray}}
\newcommand{\ea}{\end{eqnarray}}

\newcommand{\half}{\frac{1}{2}}

\newtheorem{proposition}{Proposition}

\newtheorem{thm}{Theorem}
\newtheorem{Lemma}{Lemma}

\definecolor{ss}{RGB}{250,80,220}
\begin{document}

\title{Contextuality sans  incompatibility in the simplest scenario:  Communication supremacy of a qubit}

\author{Partha Patra}
\email{parthapatra144@gmail.com}
\affiliation{Department of Physics, Indian Institute of Technology Hyderabad, Telengana-502284, India}

\author{Sumit Mukherjee}
\email{mukherjeesumit93@gmail.com}
\affiliation{Korea Research Institute of Standards and Science, Daejeon 34113, South Korea}

\author{A. K. Pan}
\email{akp@phy.iith.ac.in}
\affiliation{Department of Physics, Indian Institute of Technology Hyderabad, Telengana-502284, India}

\begin{abstract}
Conventional wisdom asserts that measurement incompatibility is necessary for revealing the non-locality and contextuality. In contrast, a recent work [\href{https://link.aps.org/doi/10.1103/PhysRevLett.130.230201}{Phys. Rev. Lett. \textbf{130}, 230201 (2023)}] demonstrates the generalized contextuality without measurement incompatibility by using a five-outcome qubit measurement. In this paper, we introduce a two-party prepare-measure communication game involving specific constraints on preparations, and we demonstrate contextuality sans incompatibility in the simplest measurement scenario, requiring only a three-outcome extremal qubit measurement. This contrasts with the aforementioned five-outcome qubit measurement, which can be simulated by an appropriate convex mixture of five three-outcome incompatible qubit measurements.   Furthermore, we illustrate that our result has a prominent implication in information theory. Our communication game can be perceived as a constrained Holevo-Frankle-Weiner (HFW) scenario, as operational restrictions are imposed on preparations. We show that the maximum success probability of the game by using a qubit surpasses that attainable by a $c$-bit, even when shared randomness is a free resource. Consequently, this finding exemplifies the supremacy of a qubit over a $c$-bit within a constrained HFW framework. Thus, alongside offering fresh insights into quantum foundations, our results pave a novel pathway for exploring the efficacy of a qubit in information processing tasks.
\end{abstract}
 
\maketitle

\section{Introduction} No-go theorems play a critical role in elucidating the departure of quantum theory from the classical worldview. Subsequent to Bell's theorem \cite{Bell1964},  the Kochen-Specker (KS) no-go theorem \cite{KS1967} is arguably the most celebrated one. It asserts that certain quantum statistics cannot be reproduced by a noncontextual ontological model that deterministically assigns values to the projectors in co-measurable context - a feature is widely referred to as KS contextuality. The traditional KS contextuality has been subsequently generalized \cite{Spekkens_Contextuality2005} to accommodate unsharp measurements and further extended the notion of contextuality to preparations and transformations.  This form of generalized contextuality has recently been extensively studied \cite{Flatt2022,Mukherjee2022,Spekkens2009,Schmid2022,Saha2019,Pan2019,schmid18dis,schmidgptconPRX,lostaglio2020,dephasingContx23,abhyoudai23,lostaglio20}.

On the other hand, a profound non-classical feature of quantum theory is the existence of measurements that cannot be performed jointly, a feature widely referred to as measurement incompatibility.  It is commonly believed that demonstrating any form of non-classicality, such as nonlocality, steering, or contextuality, necessitates the presence of incompatible measurements. In fact, it has already been established that measurement incompatibility is necessary to reveal nonlocality \cite{wolf09,Bene18,hirsch18} and steering \cite{quintino16,uola16,mukherjee24}.

In contrast to the prevailing consensus, in a recent work, Selby \emph {et al.} \cite{Selby2023} (an elegant refinement of \cite{j.Singh2023}) demonstrated that measurement incompatibility is \emph{not} necessary to reveal generalized contextuality. This is shown in a prepare-measure scenario involving five equally spaced states on an equator of the Bloch sphere and a fixed-input but five-outcome unsharp qubit measurement. They also argued that such a measurement cannot be simulated by post-processing of incompatible sharp measurements. 

In this paper, we address an immediate question. Is it possible to reveal contextuality without measurement incompatibility in a simpler scenario? We answer this question affirmatively for the simplest possible scenario involving only a three-outcome qubit measurement in a prepare-measure communication game. An appropriately defined winning condition coupled with operational constraints on preparations leads to a quantum success probability that surpasses the generalized noncontextual bound. Crucially, our three-outcome qubit measurement cannot be simulated by classical post-processing of other incompatible measurements. However, we observe an intriguing fact that the five-outcome qubit measurement used in \cite{Selby2023} can be simulated through the convex mixture of an appropriate set of five three-outcome measurements. We explicitly show that any pair from the set of five measurements is incompatible.  

Our result has a profound implication in quantum information processing. It necessitates a re-assessment of the asserted equivalence between a $c$-bit and a qubit, as delineated by the celebrated Holevo theorem \cite{Holevo1973} and later generalized by Frankle-Weiner \cite{FW_2015}. The contextual advantage in our communication game, in turn, demonstrates the supremacy of a single qubit over a $c$-bit. At first glance, this appears to be in contradiction with the HFW results \cite{Holevo1973,FW_2015}. However, our communication game incorporates operational restrictions on preparations and, therefore, can be interpreted as a constrained HFW framework. We explicitly prove that unbounded shared randomness cannot diminish qubit communication supremacy. Therefore, beyond its foundational significance, our result paves the way for new research directions to explore information processing tasks in the HFW scenario.

\emph{Generalized contextuality}:- Inspired by the Leibniz principle, the notion of generalized noncontextuality \cite{Spekkens_Contextuality2005} emerges from the fundamental definition of operationally equivalent experimental procedures. The presumed equivalence within an ontological model of the operational theory underpins the formalization of generalized noncontextuality. To describe it more precisely, we take the example of operational quantum theory where the state associated with the preparation procedure $P_{x}$ is represented by the density matrix $\rho_{x} $, and any measurement $M_{y}$ is represented by a set of positive operator-valued measures (POVM) $\{E_{m|y}\}$ corresponding to the outcomes $m$. The operational equivalence between two preparations and measurements in quantum theory are respectively represented as

\begin{equation}\label{equiv1}
    \sum_{x}q_{x} \ \rho_{x}=\sum_{\bar{x}}q_{\bar{x}} \ \rho_{\bar{x}} \ ,  \ \ \sum_{m,y}r_{m,y} E_{m|y}=\sum_{\bar{m},\bar{y}}r_{\bar{m},\bar{y}}E_{\bar{m}|\bar{y}},
\end{equation}

where $q_{x}, q_{\bar{x}},r_{m,y},r_{\bar{y}}$ are appropriate positive real numbers.

An ontological model \cite{Harrigan2010,Spekkens_Contextuality2005,Leifer_2014} of quantum theory posits a set of ontic states $\lambda \in \Lambda$ aiming to provide a realist description of the outcome of any measurement. Quantum preparation $\rho_{x}$ is represented by a probability distribution $\mu(\lambda|\rho_{x})$ on $\lambda$ with $\int_{\Lambda} \mu(\lambda|\rho_{x}) d\lambda =1 $. Given a measurement $M_y$, the ontic state assigns a response function $\xi({m|\lambda,M_{y}})$ corresponding to the outcome $m$, satisfying $\sum_m\xi({m|\lambda,M_{y}})=1, \forall\lambda$. A valid ontological model must reproduce the Born rule $i.e.$, $\sum_{\lambda\in\Lambda}\mu(\lambda|\rho)\xi(m|\lambda,M)=\Tr(\rho E_m)\label{reproduce_born_rule}$.

The assumption of preparation (measurement) noncontextuality in an ontological model demands the following. If two preparations (measurements) procedures are operationally equivalent as in Eq. \eqref{equiv1} then their corresponding ontological description also preserves such equivalences, namely,
\begin{eqnarray}\label{gnc}
&&\sum_{x}q_{x} \ \mu(\lambda|\rho_{x})=\sum_{x}q_{\bar{x}} \ \mu(\lambda|\rho_{\bar{x}}),  \ \forall \lambda \\ \nonumber
    && \sum_{m,y}r_{m,y} \ \xi(m|\lambda,M_{y})=\sum_{\bar{m},\bar{y}}r_{\bar{m},\bar{y}}\xi(\bar{m}|\lambda,M_{\bar{y}}), \ \forall \lambda 
\end{eqnarray}

An ontological model that is preparation and measurement noncontextual is termed as generalized noncontextual. 

\emph{Measurement incompatibility}:-  All measurements in quantum theory cannot be performed jointly. For the case of projective measurement, commutativity of the operators corresponding to the observables ensures the measurement compatibility. However, commutativity is only a sufficient but not a necessary condition for the existence of a joint probability of measurement outcomes. In quantum theory, two measurements $M_i$ and $M_j$ represented by POVMs $\{E_i\}$ and $\{E_j\}$ respectively, are said to be jointly measurable or compatible if there exists a global POVM $G_{i,j}$ that satisfies the properties, i) Positivity: $G_{i,j}\geq 0 \hspace*{0.6cm}\forall i,j$
    ii) Completeness :$\sum_{i,j} G_{i,j}=1$ and
    iii) Reproducibility of marginals: $E_i=\sum_{j}G_{i,j}; \ E_j=\sum_{i}G_{i,j}$. Otherwise, these measurements are called incompatible. Incompatible quantum measurements showcase the non-classicality, and are proven to be a resource in several quantum information processing tasks \cite{Guhne2023}.
    
Note that, measurement incompatibility is typically discussed when at least two POVMs are involved. A POVM representing a single measurement may then be considered as trivially compatible. However, such a POVM may or may  not be simulable \cite{ozmaniac17} by other incompatible POVMs through arbitrary post-processing arrangements. We refer to the former case as pseudo compatibility while the latter case as true compatibility. Distinguishing truly compatible POVMs from those with pseudo compatibility is crucial for our work. We demonstrate the proof of contextuality by using a three-outcome POVM that is truly compatible in the above sense.


\section{Contextuality without incompatibility in simplest scenario}
   We demonstrate the contextuality  without incompatible measurements in the simplest possible scenario by using a single three-outcome qubit measurement. For this, we introduce a prepare-measure communication game played by two parties, Alice and Bob, with constraints on Alice's preparations.
   
 Alice receives six inputs $\{x a\}\in \{00,01,10,11,20,21\}$ from a uniformly random distribution and, accordingly, selects the preparation procedure $\mathbb{P}_{xa}$ and sends the prepared states to Bob. Here $x\in\{0,1,2\}$ and $a\in \{0,1\}$. Bob chooses an appropriate measurement procedure $\mathcal{M}$ that produces three outputs $b\in \{0,1,2\}$. The winning rule of the game is that Bob has to output $b=x\oplus_3a$.
Therefore, the average success probability of winning the game is,
\begin{eqnarray}\label{Avg_Success}
    \mathcal{P}= \frac{1}{6} \sum_{{x\in \{0,1,2\}}\atop a\in\{0,1\}}p(b=x\oplus_3 a|\mathbb{P}_{xa},\mathcal{M})
\end{eqnarray}
Alice can communicate quantum or classical message of the same dimension to Bob. However, Alice's communication abides by specific cryptographic restrictions such that the message should not contain information of $a$ and $x\oplus_3 2a$. We refer to these restrictions as \emph{parity concealment constraints}, formally defined in the following.
\begin{subequations}
  \begin{eqnarray} \label{Parity_Constraint}
    \forall b\& \mathcal{M}\ \ \sum_{x\in \{0,1,2\}} p\left(b|\mathcal{M},\mathbb{P}_{x0}\right)&&= \sum_{x\in \{0,1,2\}} p\left(b|\mathcal{M},\mathbb{P}_{x1}\right) \\ 
    \label{Parity_Constraint2}
    \forall b\& \mathcal{M}\ \ \sum_{x,a:x\oplus_32a=0} p\left(b|\mathcal{M},\mathbb{P}_{xa}\right)&&= \sum_{x,a:x\oplus_32a=1} p(b|\mathcal{M},\mathbb{P}_{xa}) \\ 
    &&=\sum_{x,a:x\oplus_32a=2} p(b|\mathcal{M},\mathbb{P}_{xa})  \nonumber
\end{eqnarray}  
\end{subequations}

In quantum theory, Alice prepares six qubit states $\{\rho_{xa}\}$ corresponding to the preparation procedures $\{\mathbb{P}_{xa}\}$. Upon receiving the qubit from Alice, Bob implements the three-outcome qubit measurement $\mathbb{M}\equiv\{E_b | \ E_b=\alpha_{b} \pi_{b}, \ \sum_{b=0}^{2}E_{b}= \mathds{I} \}$ where $\alpha_b\in [0,1]$. Here, $\pi_{b}$ is the projector of some auxiliary two-outcome measurements. The quantum success probability can then be written as 
\begin{eqnarray}\label{Avg_Succ}
    \mathcal{P}_Q\equiv\frac{1}{6}\sum_{x\in\{0,1,2\}\atop a\in\{0,1\}}\Tr[\rho_{xa} E_{x\oplus_3 a}]
\end{eqnarray}
Note that, to satisfy Eqs. (\ref{Parity_Constraint}) and (\ref{Parity_Constraint2}) the preparations in quantum theory must satisfy $\sum\limits_{x\in\{0,1,2\}}\rho_{x0}$ $=$$\sum\limits_{x\in\{0,1,2\}}\rho_{x1}$ and $\sum\limits_{xa:x\oplus_32a=0}\rho_{xa}$ $=$$\sum\limits_{xa:x\oplus_3 2a=1}\rho_{xa}$$=$$\sum\limits_{xa:x\oplus_3 2a=2}\rho_{xa}$. Following these restrictions, we derive the maximum quantum success probability, as encapsulated in the following theorem.
\begin{Lemma}\label{maxQ}
If Alice communicates a qubit $\rho_{xa}$ and Bob performs a three-outcome measurement $\mathbb{M}$, the maximum success probability $\mathcal{P}_Q^{max}=\frac{1}{3}(1+\frac{\sqrt{3}}{2})$, is attained for $\alpha_b=\frac{2}{3}$, $\forall b$.
\end{Lemma}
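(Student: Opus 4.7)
The plan is to transcribe the problem into the Bloch representation. Write $\rho_{xa}=\tfrac12(I+\vec{r}_{xa}\cdot\vec\sigma)$ and $E_b=\alpha_b\cdot\tfrac12(I+\vec{n}_b\cdot\vec\sigma)$ with $|\vec{n}_b|=1$; completeness $\sum_b E_b=I$ then forces $\sum_b\alpha_b=2$ and $\sum_b\alpha_b\vec{n}_b=0$. A short calculation recasts the success probability as $\mathcal{P}_Q=\tfrac13+\tfrac{1}{12}\sum_b\alpha_b\,\vec{s}_b\cdot\vec{n}_b$, where $\vec{s}_b:=\sum_{(x,a):\,x\oplus_3 a=b}\vec{r}_{xa}$ is the sum of the Bloch vectors of the two preparations whose target output is $b$.

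Next, I would use the parity-concealment constraints \eqref{Parity_Constraint}--\eqref{Parity_Constraint2} to eliminate Alice's $a=1$ preparations. Setting $\vec{u}_x:=\vec{r}_{x0}$, the constraints force $\vec{r}_{x1}=\vec{T}-\vec{u}_{x-1\bmod 3}$ with $\vec{T}:=\tfrac{2}{3}(\vec{u}_0+\vec{u}_1+\vec{u}_2)$, so $\vec{s}_b=\vec{T}+\vec{u}_b-\vec{u}_{b+1\bmod 3}$. Since $\sum_b\alpha_b\vec{n}_b=0$ kills the $\vec{T}$-piece, one obtains the clean form
\[
\mathcal{P}_Q=\frac{1}{3}+\frac{1}{12}\sum_b\vec{u}_b\cdot\vec{c}_b,\qquad \vec{c}_b:=\alpha_b\vec{n}_b-\alpha_{b-1}\vec{n}_{b-1},
\]
with $\sum_b\vec{c}_b=0$ automatically. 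I would then exhibit the candidate extremiser: $\alpha_b=\tfrac{2}{3}$ with $\{\vec{n}_b\}$ a trine on the equator and $\{\vec{u}_x\}$ the unit equatorial trine rotated by $\pi/6$. A direct computation verifies $\vec{T}=0$ (so $|\vec{r}_{xa}|=1$ for every preparation), each $\vec{u}_b\cdot\vec{c}_b=|\vec{c}_b|=2/\sqrt{3}$, whence $\sum_b\vec{u}_b\cdot\vec{c}_b=2\sqrt{3}$ and $\mathcal{P}_Q=\tfrac{1}{3}(1+\sqrt{3}/2)$.

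To establish optimality, I would invoke the cyclic $\mathbb{Z}_3$ symmetry of the winning rule (jointly cycling $x$ and $b$ while fixing $a$) to symmetrise any candidate optimum: averaging over the three cyclic images preserves both the objective and the parity-concealment constraints, so one may restrict to $\mathbb{Z}_3$-symmetric POVMs, which forces $\alpha_b=\tfrac{2}{3}$ and $\{\vec{n}_b\}$ to be a trine. Cauchy--Schwarz then gives $\sum_b\vec{u}_b\cdot\vec{c}_b\le\sum_b|\vec{c}_b|=2\sqrt{3}$, matching the construction. The main delicacy is justifying the symmetrisation rigorously---in particular, that cyclic averaging of valid preparations remains valid under all Bloch-ball and parity-concealment constraints---and handling the coupled constraint $|\vec{r}_{x1}|\le 1$ in the Cauchy--Schwarz step, which is resolved by the explicit extremiser above where both $|\vec{u}_x|=1$ and $|\vec{r}_{x1}|=1$ are simultaneously saturated.
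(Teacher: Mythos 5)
Your reduction to $\mathcal{P}_Q=\frac{1}{3}+\frac{1}{12}\sum_b\vec{u}_b\cdot\vec{c}_b$ with $\vec{c}_b=\alpha_b\vec{n}_b-\alpha_{b-1}\vec{n}_{b-1}$ is correct, and is in fact somewhat more general than the paper's set-up (the paper posits from the outset that the six states are pure with $\rho_{x0}+\rho_{x1}=\mathds{I}$, i.e.\ $\vec{T}=0$, and derives the trine structure of the $\vec{u}_x$ from the remaining constraint); your explicit extremiser matches the paper's and does achieve $\frac{1}{3}(1+\frac{\sqrt{3}}{2})$. The gap is in the converse bound. The $\mathbb{Z}_3$ symmetrisation you invoke does not go through: the objective is bilinear in (states, POVM), so after maximising over states the quantity $F(\{E_b\})=\max_{\rho}\mathcal{P}_Q$ is a maximum of linear functions and hence \emph{convex} in the POVM; averaging a POVM over its three cyclic images can therefore only decrease $F$, and a maximum of a convex function need not sit at the symmetric point. (There is also no unitary implementing the cyclic relabelling of a generic three-outcome POVM, so the orbit average is not merely a relabelling of the original strategy.) Since Cauchy--Schwarz is applied only after this reduction, the bound is not established for asymmetric POVMs.

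That this is not a removable technicality can be seen concretely: take $\alpha_0=\alpha_1=1$, $\alpha_2=0$ with $\vec{n}_1=-\vec{n}_0$. This is a valid POVM, $\sum_b|\vec{c}_b|=4$, and your Cauchy--Schwarz bound gives $\mathcal{P}_Q\le\frac{1}{3}+\frac{4}{12}=\frac{2}{3}>\frac{1}{3}(1+\frac{\sqrt{3}}{2})$, whereas the actual optimum for such an effectively two-outcome measurement is $7/12$ (cf.\ Fig.~2 of the paper). The discrepancy is absorbed precisely by the coupled constraint $|\vec{T}-\vec{u}_{x}|\le1$ on the $a=1$ preparations, which you verify only on the extremiser rather than using in the bound. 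The paper closes this step differently: with the states reduced to a trine ($A_0+A_1+A_2=0$, $\langle\{A_i,A_j\}\rangle=-1$), it aligns each $B_b$ with $A_b-A_{b+1}$ and then feeds these optimal $B_b$ back into the completeness relation $\sum_b\alpha_bB_b=0$, which forces $\alpha_0=\alpha_1=\alpha_2=\frac{2}{3}$ before the numerical bound is evaluated. You would need either an argument of that kind or to carry the $|\vec{r}_{x1}|\le1$ constraint explicitly through the joint optimisation over $(\vec{u},\vec{n},\alpha)$; as written, the optimality half of the Lemma is not proved.
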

The detailed proof is a bit lengthy and is therefore deferred to the Supplemental Material \cite{akp2025}. Here, we provide the states and measurements for the optimal quantum strategy. For $a=0$, Alice's preparation has to be trine-spin qubit states. For example, $\rho_{x0}=\ketbra{\phi_x}{\phi_x}=\frac{1}{2}\left(\mathds{I}+\cos\theta_x \ \sigma_Z+ \sin{\theta_x}\ \sigma_X\right)$, where $\theta_x=\frac{2\pi x}{3}$ with $\ x \in \{0,1,2\}$. For $a=1$, the preparations $\rho_{x1}$ are orthogonal to $\rho_{x0}$ for all $x$ and are represented as $\rho_{x1}=\ketbra{\bar{\phi}_{x\oplus_32}}{\bar{\phi}_{x\oplus_32}}$.   
Note that the constraints on preparations given by Eq. (\ref{Parity_Constraint2}) are naturally satisfied for this set of preparations as $ \frac{1}{2}\sum \limits _{x,a:x\oplus_32a=0}\rho_{xa}=\frac{1}{2}\sum \limits _{x,a:x\oplus_32a=1}\rho_{xa}=\frac{1}{2}\sum \limits _{x,a:x\oplus_32a=2}\rho_{xa}=\frac{\mathds{I}}{2}$ and also Eq.\eqref{Parity_Constraint} as $\frac{1}{3} \sum \limits _{x}\rho_{x0}=\frac{1}{3}\sum \limits _{x}\rho_{x1}=\frac{\mathds{I}}{2}$.

We also proved that the maximum success probability of the communication game is attained for POVM $\{E_b=\alpha_b\pi_b\}$ when $\alpha_b=\frac{2}{3}$,$i.e.$, the trine-spin
POVM. Specifically, $E_{b}=\frac{2}{3}\ketbra{\psi_b}{\psi_b}$, where, $\ketbra{\psi_b}{\psi_b}=\frac{1}{2}\left(\mathds{I}+\cos\theta_b \ \sigma_Z- \sin{\theta_b}\ \sigma_X\right), \theta_b=\frac{2\pi}{3}(\frac{1}{4}+b)$, $\  b \in \{0,1,2\}$. It is crucial to highlight that this POVM is extremal, and cannot be simulated by any classical post-processing of other measurements \cite{D_Ariano_2005}. Therefore, our three-outcome POVM is a truly compatible POVM. 

Here, we note that the operational parity concealment constraints (\ref{Parity_Constraint}) and (\ref{Parity_Constraint2}) are directly reflected in the corresponding ontological model, provided that the model is preparation noncontextual \cite{Spekkens2009}. Consequently, deriving the noncontextual bound for the success probability fundamentally relies on the ontological version of these constraints. To derive this bound, we first prove the following.

\begin{proposition}\label{NO_common_support}
For a generalized noncontextual model that aims to reproduce statistics of a three-outcome qubit POVM $\{E_b=\alpha_b\pi_b\}$, there exists no common ontic state $\lambda_c \in \Lambda$ in support of all three response functions in the set $\{ \xi(b|E_b,\lambda_c)\}$.
\end{proposition}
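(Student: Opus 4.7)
The plan is to derive a tight constraint on each response function $\xi(b|\mathbb{M},\lambda)$ via an operational equivalence that ties the POVM effect $E_b$ to the sharp projector $\pi_b$, and then obtain the no-common-support conclusion from a short counting argument based on $\sum_b\alpha_b=2$.

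First, for each $b\in\{0,1,2\}$ I would construct the following operational equivalence. Consider the sharp two-outcome PVM $M_b=\{\pi_b,\mathds{I}-\pi_b\}$ together with the classical post-processing ``output $b$ with probability $\alpha_b$ when the $\pi_b$ outcome fires, output the complementary label otherwise.'' The resulting POVM is $\{\alpha_b\pi_b,\mathds{I}-\alpha_b\pi_b\}=\{E_b,\mathds{I}-E_b\}$, which is identical to the two-outcome coarse-graining of $\mathbb{M}$ obtained by merging the outcomes $\neq b$. The two procedures realize the same set of effects and are therefore operationally equivalent.

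Applying measurement noncontextuality to this equivalence forces the response function for the effect $E_b$ to be procedure-independent, yielding
\[
\xi(b|\mathbb{M},\lambda)=\alpha_b\,\xi(\pi_b|M_b,\lambda)\qquad \forall\,\lambda,\, b.
\]
Since $M_b$ is a sharp projective qubit measurement and each $\pi_b$ admits a pure orthogonal state, preparation noncontextuality in turn forces outcome determinism on the PVMs $M_b$, i.e. $\xi(\pi_b|M_b,\lambda)\in\{0,1\}$. Combining these two facts gives $\xi(b|\mathbb{M},\lambda)\in\{0,\alpha_b\}$ for every $\lambda,\,b$. Now normalization of $\mathbb{M}$'s response functions requires $\sum_b\xi(b|\mathbb{M},\lambda)=1$, while tracing $\sum_b\alpha_b\pi_b=\mathds{I}$ for the qubit rank-one $\pi_b$ gives $\sum_b\alpha_b=2$. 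If some $\lambda_c$ lay in the support of all three response functions, each $\xi(b|\mathbb{M},\lambda_c)$ would have to saturate at $\alpha_b$, summing to $2\ne 1$. This contradiction rules out any such common $\lambda_c$.

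The principal obstacle I foresee is cleanly justifying the outcome-determinism step for the sharp PVMs $M_b$ inside the noncontextual model; this is the standard Spekkens-type move in the qubit setting, but it has to be invoked via preparation noncontextuality applied to the complementary pure eigenstates of each $\pi_b$ rather than assumed outright. If one wished to sidestep this, an alternative route would exploit the specific game preparations $\{\rho_{xa}\}$ and their operational equivalences (for instance $\rho_{x0}+\rho_{x1}=\mathds{I}$ and $\tfrac{1}{3}\sum_x\rho_{x0}=\tfrac{1}{3}\sum_x\rho_{x1}=\tfrac{1}{2}\mathds{I}$) to pin the support structure of the response functions directly, without invoking determinism as a black box.
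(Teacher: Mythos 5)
Your argument reaches the stated conclusion, but it takes a genuinely different route from the paper's, and the route it takes imports an ingredient that the paper deliberately withholds. The paper's proof is support-theoretic: the eigenstates $\ket{\chi_b}$ of the projectors $\pi_b$ are perfectly anti-distinguishable (via the elimination game), so the distributions $\mu_{\ket{\chi_b}}(\lambda)$ can have no triple-wise common support; the theorem that a preparation-noncontextual model is maximally $\psi$-epistemic is then invoked to identify $Supp[\xi(b|E_b,\lambda)]$ with $Supp[\mu_{\ket{\chi_b}}(\lambda)]$, which transfers the no-common-overlap property to the response functions and nothing more. Your route instead combines measurement noncontextuality applied to the post-processing equivalence $\{E_b,\mathds{I}-E_b\}\simeq(\text{post-processed }M_b)$ with PNC-induced outcome determinism for the sharp PVMs $M_b=\{\pi_b,\mathds{I}-\pi_b\}$ to force $\xi(b|\mathbb{M},\lambda)\in\{0,\alpha_b\}$, and then counts against $\sum_b\alpha_b=2$.

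The concrete problem is that your intermediate conclusion proves far more than the proposition and collides with the rest of the paper. If $\xi(b|\mathbb{M},\lambda)\in\{0,\alpha_b\}$ and $\sum_b\xi(b|\mathbb{M},\lambda)=1$, a noncontextual assignment exists only when some subset of the $\alpha_b$ sums to $1$; for $\alpha_b=2/3$ no such subset exists, so your chain of reasoning rules out \emph{every} noncontextual model outright, not merely those with a common ontic state. That directly contradicts Proposition 2 of the paper, which shows the response functions \emph{cannot} all be deterministic and must be permitted to take any value in $[0,\alpha_b]$, and it would render Theorem 1's bound $\mathcal{P}_{NC}\leq 7/12$ an optimization over an empty set. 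The culprit is the outcome-determinism step: it requires adding to the operational scenario the three mutually incompatible sharp measurements $M_b$, their eigenstate preparations, and the post-processing equivalence tying $E_b$ to $\pi_b$ — precisely the resources the paper keeps off the measurement side so that (i) the "without incompatibility" claim survives and (ii) a nonempty class of noncontextual models remains to be bounded. Either you must justify why those procedures are legitimately in scope (in which case you are claiming a much stronger, framework-breaking no-go than the proposition), or you should retreat to the weaker, support-level consequences of PNC — anti-distinguishability plus maximal $\psi$-epistemicity — which is exactly the paper's path and delivers the no-common-support statement without overshooting. Your own closing remark about sidestepping determinism via the game preparations points in the right direction; that is essentially what the paper does.
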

\begin{proof}
The proof is by contradiction. Consider an elimination game \cite{Perry2015} between Charlie and Debbie where Charlie prepares a state from the set $\mathbb{S}\equiv \{\pi_b=\ketbra{\chi_b}{\chi_b}\}$, here, $b \in \{0,1,2\}\}$, chosen uniformly at random and sends them to Debbie. In every run, Debbie's task is to eliminate one of the states $\ket{\chi_{b'}} : b'\neq b$, that is not sent. It is straightforward to see that the states $\{\ket{\chi_b}\}$ are perfectly anti-distinguishable by the measurement, $\bar{\mathbb{M}}\equiv \{\bar{E_b}\  | \  E_b\bar{E_b}=0 \}$, $i.e.$, 
     \begin{eqnarray}
         \Tr[\ketbra{\chi_b}{\chi_b}\bar{E}_{b'}]=0\ \ \forall b=b'
     \end{eqnarray}
     Therefore, for the outcome $\bar{E}_{b'}$, Debbie perfectly eliminates $\ket{\chi_{b'}}$ and always win the game.
     
Let us analyze the above scenario in an ontological model. Consider that there exists a common $\lambda_c\in\Lambda$ such that $\mu_{\ket{\chi_b}}\big(\lambda_c\big)>0 , \ \forall b$. In the ontological description of the game, Charlie prepares $\mu_{\ket{\chi_b}}\big(\lambda\big)$ for a given quantum state $\ket{\chi_b}$ and sends it to Debbie. In some runs, there is a possibility that the ontic state $\lambda_{c}$ is being sent. For those runs, even after knowing the ontic state $\lambda_{c}$, sometimes Debbie fails to eliminate any of the $\ket{\chi_{b'}}: b'\neq b$ because $\lambda_c$ is common to all $\mu_{\ket{\chi_b}}\big(\lambda\big)$. This leads to a contradiction with quantum case that perfectly eliminates one of the states $\ket{\chi_{b'}}$ in each run of the experiment. Hence, there should not be any common ontic state $\lambda_c$ of the set $\{\mu_{\ket{\chi_b}}\big(\lambda\big)\}_b$.

Here, it is crucial to note that, in an ontological model, 
\begin{eqnarray}
    Supp[\mu_{\ketbra{\chi_b}{\chi_b}}\big(\lambda \big)]\subseteq Supp[\xi(b|\pi_b,\lambda)] \nonumber
\end{eqnarray}
\begin{figure}
    \centering
    \includegraphics[height=40mm, width=87mm,scale=1.5]{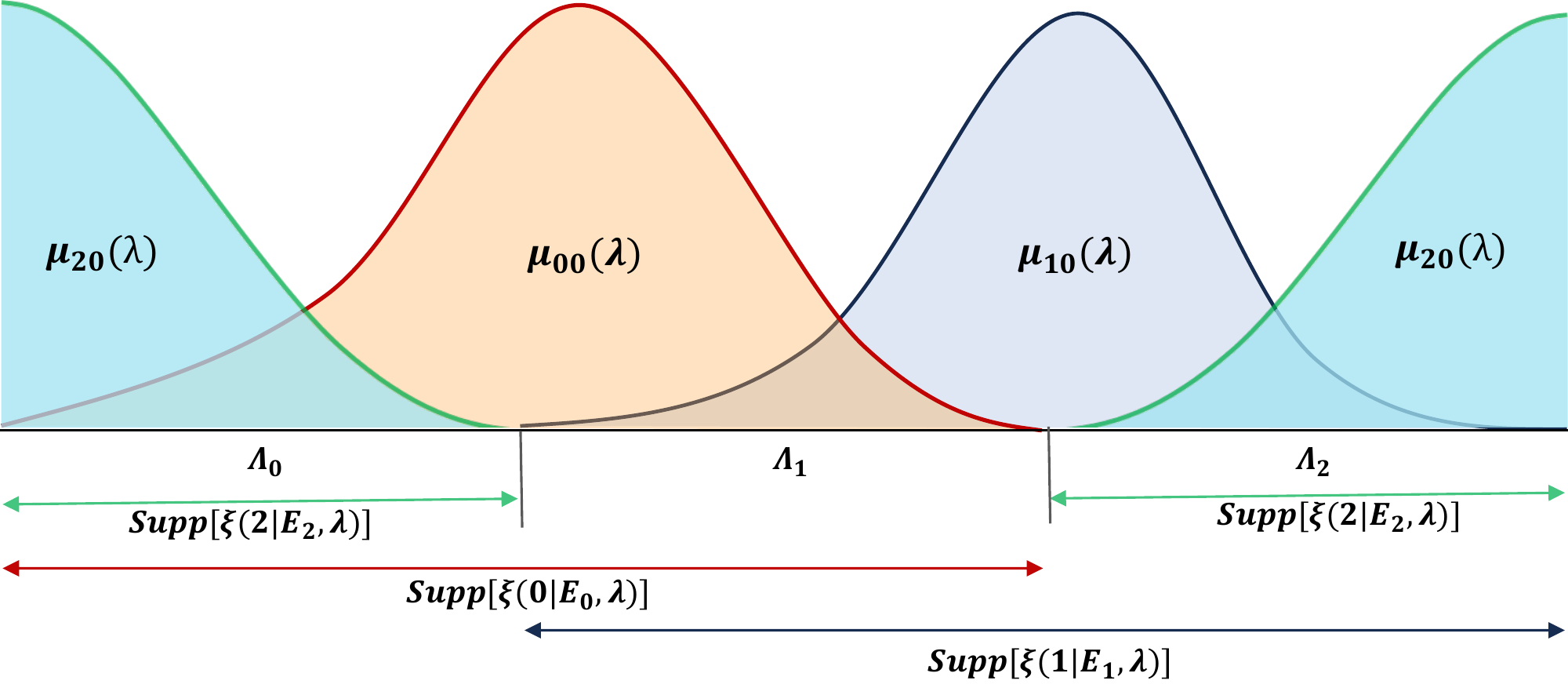}
    \caption{A pictorial illustration of different regions of ontic state space representing preparations $\mu_{x0}(\lambda)$. The whole region is divided into three parts such that each $\lambda$ assigns nonzero values to exactly two  of the response functions $\{\xi(b|E_b,\lambda)\}_{b}$ as proved in Lemma \ref{Each lambda two support}.}
    \label{POVM_Support}
\end{figure}

where, $Supp[p(\lambda)]=\{\lambda \ | \ p(\lambda)>0\}$, $i.e.$, the support of the response functions is bigger than the corresponding support of the state space of preparations \cite{Harrigan2007ontologicalmodelsinterpretationcontextuality}. In this instance, we recall a well-established result \cite{Leifer_Marony(2013), Pan_2021} that a preparation noncontextual model is  maximally $\psi$-epistemic one. In such a model, the ontic state space overlap of two distinct preparations must be the same as the Hilbert space overlap between the quantum states to which these preparations correspond \cite{Maroney2012}. In our case, as $E_b=\alpha_b\pi_b$, we can then write $Supp[\xi(b|\pi_b,\lambda)] =Supp[\xi(b|E_b,\lambda)]$. Therefore, in a preparation noncontextual model corresponding to  measurement $\mathbb{M}$,
\begin{eqnarray}
    Supp[\xi(b|E_b,\lambda)]=Supp[\mu_{\ket{\chi_b}}\big(\lambda \big)]
\end{eqnarray}

Since $\{\mu_{\ket{\chi_b}}\big(\lambda\big)\}_b$ has no common support, therefore, there should be no common support of all response functions in the set $\{\xi(b|E_b,\lambda)\}$ in a preparation noncontextual model.
\end{proof}

\begin{proposition}\label{No Deterministic model}
A generalized noncontextual model that aims to reproduce the statistics of a three-outcome qubit POVM $\{E_b=\alpha_b\pi_b\},\alpha_b>0$ cannot be deterministic, and the corresponding response functions $\{\xi(b|E_b,\lambda)\}$ must be bounded as $0\leq\xi(b|E_b,\lambda)\leq \alpha_b\ , \forall b$. 
\end{proposition}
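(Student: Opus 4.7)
My plan is to combine measurement noncontextuality---which promotes the effect-level operational equivalences into pointwise linear relations between response functions---with the POVM completeness relation $\sum_b E_b=\mathds{I}$. I would first establish the pointwise upper bound $\xi(b|E_b,\lambda)\le\alpha_b$, and then derive the impossibility of a deterministic assignment as an immediate corollary, rather than attempting a separate proof for each claim.

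For the upper bound, I would exploit the trivial operational equivalence $E_b=\alpha_b\pi_b+(1-\alpha_b)\hat{0}$, where $\hat{0}$ is the zero effect whose response function is forced to vanish, $\xi(\hat{0}|\lambda)=0$, in any ontological model that reproduces the Born rule. Measurement noncontextuality requires the response functions to respect this convex combination, so that
\begin{equation}
\xi(b|E_b,\lambda)=\alpha_b\,\xi(\pi_b|\lambda)+(1-\alpha_b)\,\xi(\hat{0}|\lambda)=\alpha_b\,\xi(\pi_b|\lambda).
\end{equation}
Since $\pi_b$ is a rank-one projector appearing as an outcome of the two-outcome PVM $\{\pi_b,\mathds{I}-\pi_b\}$, its response function is a bona fide probability, hence $0\le\xi(\pi_b|\lambda)\le 1$, and the bound $0\le\xi(b|E_b,\lambda)\le\alpha_b$ follows at once.

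For the non-determinism part, I would combine this bound with the normalization $\sum_b\xi(b|E_b,\lambda)=\xi(\mathds{I}|\lambda)=1$, which follows from measurement noncontextuality applied to $\sum_b E_b=\mathds{I}$. For the three-outcome qubit POVM under consideration, the trace identity $\sum_b\alpha_b=2$ together with $\alpha_b>0$ and the genuine three-outcome character of the measurement (three distinct rank-one $\pi_b$, as in the trine-spin case with $\alpha_b=2/3$) rules out any $\alpha_b\ge 1$ and hence forces $\alpha_b<1$ for every $b$. A deterministic ontological model would require $\xi(b|E_b,\lambda)\in\{0,1\}$ for each $\lambda,b$; the upper bound then pins $\xi(b|E_b,\lambda)=0$ for every $b$, in direct contradiction with normalization.

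The main obstacle, I anticipate, will be justifying cleanly that measurement noncontextuality lifts the effect-level equality $E_b=\alpha_b\pi_b+(1-\alpha_b)\hat{0}$ to a linear identity of response functions, including the degenerate mixture involving the zero effect. This is standard in the Spekkens framework but deserves an explicit articulation; if a direct appeal feels unsatisfactory, I would instead compare the three POVMs $\{E_0,E_1,E_2\}$, $\{E_b,\mathds{I}-E_b\}$, and $\{\pi_b,\mathds{I}-\pi_b\}$, use operational equivalences among their shared and mixed effects, and conclude the same linear relation without invoking $\hat{0}$ explicitly.
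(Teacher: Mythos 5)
Your proof is correct and follows essentially the same route as the paper's: both arguments use measurement noncontextuality to reduce $\xi(b|E_b,\lambda)$ to $\alpha_b\,\xi(\pi_b|\lambda)$ with $\xi(\pi_b|\lambda)\in[0,1]$, which yields the bound, and both extract non-determinism from the ontological version of the completeness relation. The only cosmetic difference is that you place the contradiction at the level of the $E_b$ themselves (using $\alpha_b<1$, which you correctly justify from the distinctness of the three rank-one $\pi_b$), whereas the paper works at the level of the projectors via $\sum_b\alpha_b\xi(b|\pi_b,\lambda)=1$ together with $\xi(b|\pi_b,\lambda)+\xi(b|\pi_b^{\perp},\lambda)=1$.
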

\begin{proof}
Consider again the measurement $\mathbb{M}\equiv\big\{E_{b}|E_{b}=\alpha_b\pi_b \big\}$. In an ontological model, the completeness relation $\sum \limits_{b\in\{0,1,2\}}\alpha_b\pi_b=\mathds{I}$ implies $\sum \limits_{b\in\{0,1,2\}}\alpha_b \xi(b|\pi_b, \lambda)=1$. Moreover, since $\frac{1}{2}(\pi_{b}+\pi_b^{\perp})=\frac{1}{2}\sum \limits_{b\in\{0,1,2\}}\alpha_b\pi_b=\frac{\mathds{I}}{2}$ constitutes a valid operational equivalence 
 defined in Eq.\eqref{equiv1} for the POVM element $\frac{\mathds{I}}{2}$. Therefore, the measurement noncontextuality assumption of Eq. \eqref{gnc} trivially dictates $\xi(b|\pi_{b},\lambda)+\xi(b|\pi_{b}^{\perp},\lambda)=1$. Now, it is straightforward to check that $\xi(b|\lambda,\pi_b)\in\{0,1\}\ \forall b$ does not satisfy $\sum \limits_{b}\alpha_b=2$ for $\alpha_b>0, \forall b$. Therefore, for any $\lambda$, an ontological model cannot assign a deterministic value to all the response functions of the set $\{\xi(b|\pi_b,\lambda)\}$. 
    
     Although for a given $\lambda$ the three response functions associated with the projectors cannot take deterministic values, they are still bounded by $0\leq\xi(b|\pi_b,\lambda)\leq 1$ $\forall b \in \{0,1,2\}$. Since $E_{b}=\alpha_{b}\pi_{b}$, the corresponding response functions must be bounded as $0\leq\xi(b|E_b,\lambda)\leq \alpha_b\ \forall b \in \{0,1,2\}$.
\end{proof}

\begin{Lemma}\label{Each lambda two support}
    A generalized noncontextual model reproducing the statistics of a three-outcome qubit POVM $\{E_b=\alpha_b\pi_b\}$, for every ontic state $\lambda \in \Lambda$, exactly two of the response functions from the set $\{\xi(b|E_b,\lambda)\}$ assume non-zero values.
\end{Lemma}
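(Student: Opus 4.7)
The plan is to pin down the number of non-zero response functions at each ontic state by sandwiching it between an upper bound (at most two) coming from Proposition \ref{NO_common_support} and a lower bound (at least two) obtained by combining Proposition \ref{No Deterministic model} with the POVM completeness relation.

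First I would dispose of the upper bound. By Proposition \ref{NO_common_support}, there is no ontic state lying in the common support of all three response functions $\{\xi(b|E_b,\lambda)\}_{b=0,1,2}$. Hence for every $\lambda\in\Lambda$ at least one of these response functions must vanish, so no more than two can be non-zero simultaneously.

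Next I would establish the lower bound by contradiction. The POVM completeness $\sum_b E_b=\mathds{I}$, translated into the ontological model, yields the normalization $\sum_b \xi(b|E_b,\lambda)=1$ for every $\lambda$. Suppose that at some $\lambda_0$ only a single response function, say $\xi(b_0|E_{b_0},\lambda_0)$, is non-zero. Then normalization forces $\xi(b_0|E_{b_0},\lambda_0)=1$. But Proposition \ref{No Deterministic model} bounds this value by $\alpha_{b_0}$, and for the trine-spin POVM used here $\alpha_{b_0}=2/3<1$. More generally, any genuinely three-outcome rank-one qubit POVM obeys $\alpha_b<1$ for every $b$, because $\sum_b \alpha_b=\Tr(\mathds{I})=2$ and $\alpha_b=1$ would force the remaining two POVM elements to be collinear with $\pi_b^\perp$, collapsing the measurement to two outcomes. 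This contradicts the assumption and rules out the single-support case.

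Combining the two bounds yields that exactly two response functions are non-zero at each $\lambda$, which also gives the three-region partition of $\Lambda$ depicted in Fig.~\ref{POVM_Support}. The only delicate point I anticipate is stating the bound $\alpha_b<1$ rigorously for the general rank-one three-outcome qubit POVM, rather than just for the symmetric trine-spin case; however, this follows from a short trace-and-rank argument on the completeness condition and does not obstruct the main line of reasoning.
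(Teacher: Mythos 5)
Your proof is correct and follows essentially the same route as the paper: Proposition \ref{NO_common_support} excludes all three response functions being simultaneously non-zero, while Proposition \ref{No Deterministic model} together with the ontological completeness relation excludes exactly one being non-zero. Your explicit justification that normalization would force a lone non-zero response function to equal $1>\alpha_b$ (and the aside that $\alpha_b<1$ for any genuinely three-outcome rank-one qubit POVM) merely spells out what the paper leaves implicit.
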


\textit{Proof:} From Proposition \ref{No Deterministic model} it is clear that as the set of response functions has to satisfy the ontological version of the completeness relation, each $\lambda\in \Lambda$ can be in the support of either two or all three response functions in the set $\{\xi(b|E_b,\lambda)\}$ corresponding $\{E_b=\alpha_b\pi_b\}$, but \emph{cannot} be in the support of only one response function. Proposition \ref{NO_common_support} implies that any $\lambda\in \Lambda$ \emph{cannot} be in the supports of all the three response functions in the set $\{\xi(b|E_b,\lambda)\}$. Therefore, the conjunction of the aforementioned propositions directly leads to Lemma \ref{Each lambda two support}.


\begin{thm}\label{general_th}
The quantum advantage over the noncontextual model can be achieved for any arbitrary value of $\{\alpha_b\}$, except for two extremes when one of the $\alpha_b$  is $0$ or $1$. The maximum quantum advantage is achieved for the trine-spin measurement ($\alpha_{b}=\frac{2}{3}$, $\forall b$).
\end{thm}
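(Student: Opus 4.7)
The plan is to derive, as a function of the POVM weights $\{\alpha_{b}\}$, an upper bound on the noncontextual success probability $\mathcal{P}_{NC}$ of the game and compare it pointwise with the corresponding quantum value obtained (for general $\{\alpha_{b}\}$) by extending the proof of Lemma~\ref{maxQ}.

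First, using Lemma~\ref{Each lambda two support} I would partition the ontic space as $\Lambda = \Lambda_{0} \sqcup \Lambda_{1} \sqcup \Lambda_{2}$, where $\Lambda_{c} := \{\lambda \in \Lambda \mid \xi(c|E_{c},\lambda) = 0\}$. On $\Lambda_{c}$ the ontological completeness relation together with Proposition~\ref{No Deterministic model} forces $\xi(a|E_{a},\lambda) + \xi(b|E_{b},\lambda) = 1$ with $\{a,b,c\} = \{0,1,2\}$ and $\xi(a|E_{a},\lambda) \in [1-\alpha_{b},\, \alpha_{a}]$, and symmetrically for $\xi(b|E_{b},\lambda)$. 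I would then rewrite the success probability by grouping the six preparations according to which output $b$ wins the game, obtaining $\mathcal{P}_{NC} = \frac{1}{6} \sum_{b} \int \xi(b|E_{b},\lambda)\, \nu_{b}(\lambda)\, d\lambda$, where $\nu_{b}(\lambda) := \mu(\lambda|\mathbb{P}_{b,0}) + \mu(\lambda|\mathbb{P}_{b \oplus_{3} 2,\,1})$ collects the densities of the two inputs for which $b$ is the correct answer.

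Second, I would translate the parity concealment conditions into the ontological layer using preparation noncontextuality: Eq.~(\ref{Parity_Constraint}) gives $\sum_{x}\mu(\lambda|\mathbb{P}_{x0}) = \sum_{x}\mu(\lambda|\mathbb{P}_{x1})$ for every $\lambda$, while Eq.~(\ref{Parity_Constraint2}) imposes that the three sums $\mu(\lambda|\mathbb{P}_{c,0}) + \mu(\lambda|\mathbb{P}_{c \oplus_{3} 1,\,1})$ be equal across $c \in \{0,1,2\}$. These, combined with the normalization $\int \mu(\lambda|\mathbb{P}_{xa})\, d\lambda = 1$, restrict how much weight can be concentrated on any single $\nu_{b}(\lambda)$. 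On each region $\Lambda_{c}$ I would then pointwise maximize $\xi(a|E_{a},\lambda)\,\nu_{a}(\lambda) + \xi(b|E_{b},\lambda)\,\nu_{b}(\lambda)$; the extremum sits at an endpoint of the response interval, selected by whichever of $\nu_{a}(\lambda)$ and $\nu_{b}(\lambda)$ dominates, and is affine in $\{\alpha_{b}\}$. Integrating across the three regions and optimizing the preparation distributions under the above parity constraints then yields $\mathcal{P}_{NC}^{\max}(\{\alpha_{b}\})$ in closed form.

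Finally, I would address the two boundary regimes. If $\alpha_{b^{\star}} = 0$ for some $b^{\star}$ then $E_{b^{\star}} = 0$ and the POVM reduces to a two-outcome projective measurement, which admits a deterministic noncontextual assignment and so the noncontextual bound meets the quantum one; if $\alpha_{b^{\star}} = 1$ then $E_{b^{\star}} = \pi_{b^{\star}}$ and the remaining elements must decompose $\pi_{b^{\star}}^{\perp}$ proportionally, again reducing to a projective measurement with classical post-processing and thus no contextual advantage. For the interior of the parameter region the gap is strictly positive, and the $S_{3}$ permutation symmetry of the game together with a direct critical-point analysis of $\mathcal{P}_{Q}^{\max}(\{\alpha_{b}\}) - \mathcal{P}_{NC}^{\max}(\{\alpha_{b}\})$ on the simplex $\sum_{b}\alpha_{b} = 2$ should locate its maximum uniquely at $\alpha_{b} = \tfrac{2}{3}$, recovering the trine-spin measurement. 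The principal obstacle lies in closing the regional maximization into a global bound: the sign of $\nu_{a}(\lambda) - \nu_{b}(\lambda)$ can vary across $\lambda$ and couples the regional optima through the global preparation constraints, so a rearrangement-style or convexity argument across the three regions will likely be needed to obtain the tight noncontextual bound.
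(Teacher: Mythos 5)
Your outline follows essentially the same route as the paper's own proof: the same three-region decomposition of $\Lambda$ from Lemma 3, the same endpoint bounds $\xi(b|E_b,\lambda)\in[1-\alpha_{b'},\alpha_b]$ from Proposition 2, the same grouping of the six preparations into the pairs $(\mu_{00}+\mu_{21})$, $(\mu_{10}+\mu_{01})$, $(\mu_{20}+\mu_{11})$, and the same ontological translation of the parity-concealment constraints via preparation noncontextuality. The one point worth flagging is the "principal obstacle" you identify at the end: the coupling of the regional optima through the global preparation constraints. This is real, and it is precisely the step the paper does not close in the closed form you are hoping for. The paper obtains the analytic bound $\mathcal{P}_{NC}\leq 1/2$ only at the symmetric point $\alpha_b=2/3$ (where $\gamma_1=\gamma_2=\mu_{\mathds{I}/2}$ collapses the constraints), and for arbitrary $\{\alpha_b\}$ it relabels the integrated masses $\sum_{\Lambda_i}\mu_{xa}(\lambda)$ as finitely many variables and solves the resulting linear program numerically in Mathematica, reporting $\mathcal{P}_{NC}\leq 7/12$ with the extremes attained at $(\alpha_0,\alpha_1,\alpha_2)=(1,\tfrac12,\tfrac12)$ and $(0,1,1)$. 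So if you pursue the closed-form $\mathcal{P}_{NC}^{\max}(\{\alpha_b\})$ and the critical-point analysis of the gap on the simplex, you would actually be proving something sharper than what the paper establishes; as written, your proposal is a correct plan whose hardest step (the rearrangement/convexity argument across regions) remains to be supplied, just as it does in the paper.
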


\textit{Sketch of the proof:-} An explicit proof is deferred to the Supplemental Material \cite{akp2025}. Here we provide a sketch of it. 

In an ontological model, the average success probability defined in Eq.\eqref{Avg_Success} can be written as
\begin{eqnarray}\label{NC_Avg_Succ}
     \mathcal{P}_{ont}\equiv \frac{1}{6}\sum_{\lambda\in\Lambda}\sum_{x=0}^{2}[\mu_{x0}(\lambda)\xi\left(x|\lambda\right)+\mu_{x1}(\lambda)\xi\left(x \oplus_{3} 1|\lambda\right)]
\end{eqnarray}
Following Lemma \ref{Each lambda two support}, we divide the entire ontic state space $\Lambda$ into three disjoint parts $\Lambda_0,\Lambda_1$ and $\Lambda_2$, so that $\Lambda_0\cup\Lambda_1\cup\Lambda_2=\Lambda$ as illustrated in Fig.\ref{POVM_Support}. The proposition \ref{No Deterministic model} bounds the response functions by $0\leq\xi(b|E_b,\lambda)\leq \alpha_b\ \forall b $. To derive the maximum noncontextual value $\mathcal{P}_{NC}$, we take the maximum value corresponding to the response functions in the individual regions, and finally impose the parity concealment constraints. A straightforward optimization yields,
\begin{equation}
    \mathcal{P}_{NC}=\max_{\mu_{xa}(\lambda),\xi(b|E_b,\lambda)} \mathcal{P}_{ont} \leq \frac{7}{12}
\end{equation}
Such a value is obtained in two extreme cases (Fig. \ref{Q-NC}), $\alpha_{0}=1$ and $\alpha_{1}=\alpha_{2}=\frac{1}{2}$, and $\alpha_{0}= 0$ and $\alpha_{1}=\alpha_{2}=1$. However, in those two cases, no quantum advantage is obtained, as both quantum and noncontextual values are the same, i.e., $\mathcal{P}_{Q}=\mathcal{P}_{NC}=7/12\approx0.58$. We note that those two cases effectively correspond to a two-outcome measurement.

We analytically prove \cite{akp2025} that, corresponding to the optimal quantum strategy for $\alpha_{b}=\frac{2}{3}$, $\forall b$, the maximum success probability of the communication game in a noncontextual model is $\frac{1}{2}$, thus providing the maximum quantum advantage.  In Fig. \ref{Q-NC}, we  compare the maximum quantum and noncontextual values of the success probability for a given value of $\alpha_{0}$ with $\alpha_1 =\alpha_2$.

\begin{figure}
     \includegraphics[height=50mm, width=85mm,scale=1.5]{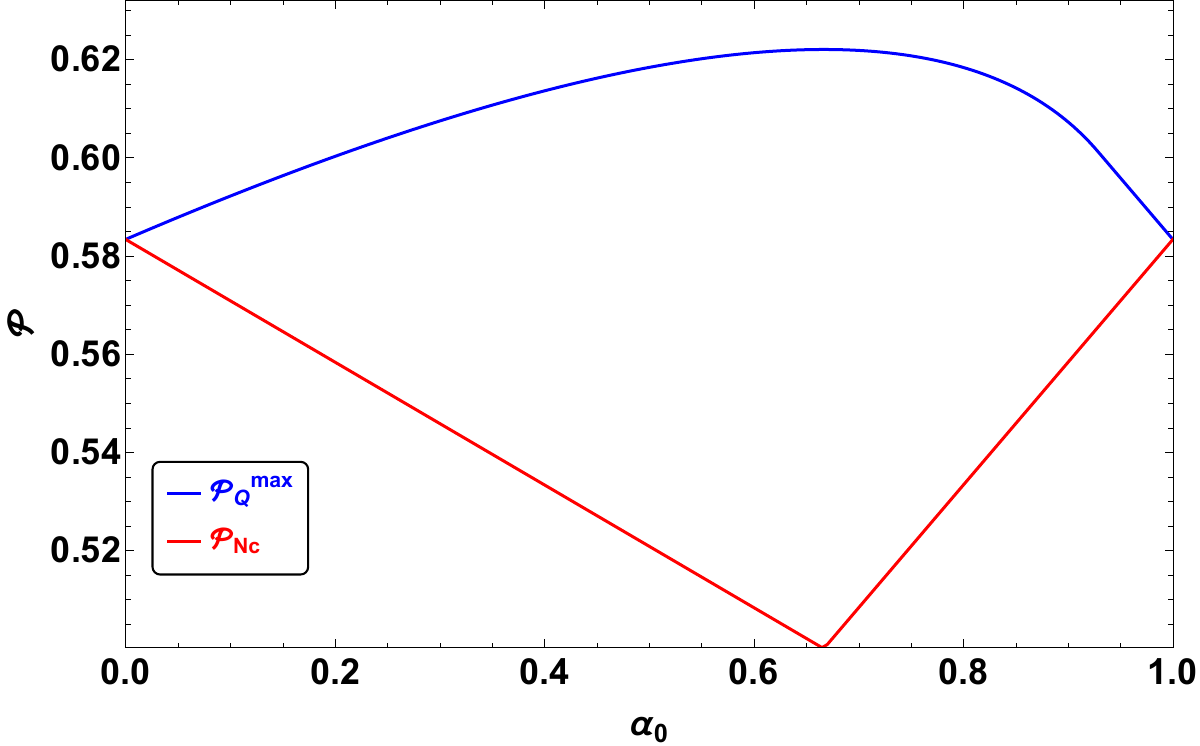}
    \caption{The maximum success probabilities $\mathcal{P}_Q$ and  $\mathcal{P}_{NC}$ are plotted against $\alpha_0$, with $\alpha_1=\alpha_2$.}
    \label{Q-NC}
\end{figure}

\emph{Remarks on the relevant work}:- In their proof of contextuality without incompatibility, Selby \emph{et al.}\cite{Selby2023} used a single five-outcome qubit measurement. We show that such POVMs can be successfully simulated by a suitably chosen set $\mathcal{S}_M\equiv\{M^0,M^1,M^2,M^3,M^4\}$ of three-outcome extremal qubit measurements. We also prove that any of the two measurements in the set $\mathcal{S}_M$ are incompatible. For this, we first prove the following.

\begin{Lemma}\label{Simulating_by_trine}
     There exists a set of three-outcome qubit measurements whose suitable convexification can simulate any arbitrary odd $n$-outcome  qubit measurement  $\mathbb{M}_{n}\equiv  \{E_{k}=\frac{2}{n}\pi_k, \  k\in \{0,1,...,n-1\}\}$  where the projector $\pi_k=\frac{1}{2}\left( \mathds{I}+\cos{\frac{2\pi k}{n}}\sigma_z + \sin{\frac{2\pi k}{n}} \sigma_x\right)$ lies in the great circle of the Bloch sphere with $n\geq 3 $.
     \end{Lemma}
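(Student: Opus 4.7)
The plan is to realize $\mathbb{M}_n$ as a uniform convex mixture of $n$ three-outcome POVMs, each obtained by cyclically rotating a single ``seed'' POVM around the great circle of the Bloch sphere. The seed will be supported on three of the $\pi_k$'s arranged symmetrically about one chosen projector.

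First, I would pick an integer $j$ with $\cos(2\pi j/n)<0$; for any odd $n\geq 3$ the open interval $(n/4,\,3n/4)$ contains such an integer. Then I would look for a seed POVM of the form $M^{0}=\{\beta_0\pi_0,\ \beta_1\pi_j,\ \beta_1\pi_{n-j}\}$, with the weights on $\pi_j$ and $\pi_{n-j}$ forced equal by the $k\leftrightarrow -k$ reflection symmetry. Demanding that $M^{0}$ sum to $\mathds{I}$ yields two linear conditions in $(\beta_0,\beta_1)$ (the $\sigma_x$ component vanishes automatically by the symmetry), whose solution is strictly positive precisely when $\cos(2\pi j/n)<0$, and one verifies directly that $\beta_0+2\beta_1=2$.

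Next, I would define $M^m$ for $m=0,\dots,n-1$ by shifting the support of the seed, $M^m=\{\beta_0\pi_m,\ \beta_1\pi_{m+j},\ \beta_1\pi_{m-j}\}$ with indices mod $n$. Each $M^m$ is automatically a valid three-outcome POVM by the rotational covariance of $\mathbb{M}_n$. To simulate $\mathbb{M}_n$, one selects $m$ uniformly from $\{0,\dots,n-1\}$, performs $M^m$, and relabels the obtained outcome by the index of its underlying projector. For any fixed output label $k$, exactly three of the $M^m$'s contribute (namely $m=k$ and $m=k\pm j$, which are distinct mod $n$ because $n$ is odd and $j\neq 0$), and their weighted sum is $\frac{1}{n}(\beta_0+2\beta_1)\pi_k=\frac{2}{n}\pi_k=E_k$, proving the simulation.

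The main technical content is the existence of the seed POVM: the oddness of $n$ is used to guarantee both the existence of an integer $j$ with $\cos(2\pi j/n)<0$ and the distinctness of the three cyclic shifts contributing to each output (the natural symmetric choice $j=n/2$ would collapse the seed to two elements for even $n$). I do not foresee any substantive obstacle beyond these two elementary observations; the remaining steps reduce to a short linear-algebra computation for the weights and a bookkeeping identity for the convex combination.
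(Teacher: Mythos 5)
Your proposal is correct and follows essentially the same route as the paper: a uniform mixture over $n$ cyclic shifts of a seed three-outcome POVM supported on $\{\pi_0,\pi_j,\pi_{n-j}\}$, with weights fixed by completeness so that each $\pi_k$ is picked up three times with total coefficient $2$. The paper simply makes the specific choice $j=\frac{n-1}{2}$ (so $\cos\frac{2\pi j}{n}=-\cos\frac{\pi}{n}<0$, giving the weights $\frac{2\cos(\pi/n)}{1+\cos(\pi/n)}$ and $\frac{1}{1+\cos(\pi/n)}$), of which your construction with general $j$ satisfying $\cos\frac{2\pi j}{n}<0$ is a mild generalization.
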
 
 The proof of Lemma \ref{Simulating_by_trine} is placed in the Supplemental Material \cite{akp2025}. Clearly, for $n=5$, the five-outcome qubit POVM used in \cite{Selby2023} can be reproduced by a set $\mathcal{S}_5\equiv\{M^0,M^1,M^2,M^3,M^4\}$ where $M^o$s are three-outcome extremal qubit POVMs. The explicit form of $\mathcal{S}_5$ is placed in the Supplemental Material \cite{akp2025}. Using the results of ref.\cite{Heinosaari2018,CarmeliIW2019} we also show that any pair of $M^o$s  are incompatible. We can then say that the POVM used in \cite{Selby2023} is pseudo-compatible. In contrast, the three-outcome extremal POVM used in this paper to achieve the maximum quantum advantage cannot be simulated in this way and is truly compatible.

\section{Implication to the HFW Scenario} The contextual quantum advantage of the communication game demonstrated here falls under the premise of the HFW scenario. The Holevo theorem \cite{Holevo1973} posits that a $q$-dit communication is not advantageous than a classical $d$-level system, as the mutual information 
is equivalent in both cases. Later, Frenkel and Weiner\cite{FW_2015} improved this result by showing that any $q$-dit statistics can be simulable by using a classical $d$-bit in the Holevo scenario, provided that the classical shared randomness is a free resource. 
In contrast, in Theorem \ref{thm2}, we demonstrate the supremacy of a qubit in a constrained HFW framework.  
\begin{thm}
\label{thm2}
    In a HFW scenario supplemented with parity concealment constraints, qubit communication overpowers $c$-bit, even in the presence of unbounded shared randomness.
\end{thm}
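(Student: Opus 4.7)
The plan is to show that the per-$\lambda$ parity concealment constraints force every shared-randomness classical protocol to satisfy $\mathcal{P}\leq 7/12 < \mathcal{P}_Q^{max}=\frac{1}{3}(1+\frac{\sqrt{3}}{2})$. My first step will be to note that in the $c$-bit $+$ SR setup, Alice's encoding $p(m|xa,\lambda)$ and Bob's decoding $\xi(b|m,\lambda)$ both have access to the shared randomness $\lambda$. Since Eqs.~(\ref{Parity_Constraint}) and (\ref{Parity_Constraint2}) must hold for every measurement $\mathcal{M}$ Bob could implement---including $\lambda$-dependent decoders---the constraints propagate to each $\lambda$ separately: for all $\lambda$ and $m$, $\sum_{x} p(m|x0,\lambda)=\sum_{x} p(m|x1,\lambda)$ and the three sums $\sum_{xa:\, x\oplus_3 2a=k} p(m|xa,\lambda)$ agree across $k\in\{0,1,2\}$. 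This is the crucial lever: shared randomness cannot be used to smuggle in $\lambda$-strategies that individually violate parity concealment but satisfy it only on average.

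For the per-$\lambda$ bound, I will group the six inputs by winning output $b=x\oplus_3 a$ and set $U_m^\lambda = p(m|00,\lambda)+p(m|21,\lambda)$, $V_m^\lambda = p(m|01,\lambda)+p(m|10,\lambda)$, $W_m^\lambda = p(m|11,\lambda)+p(m|20,\lambda)$. Optimizing $\xi$ for each $m$ and invoking $U_0^\lambda+U_1^\lambda = V_0^\lambda+V_1^\lambda = W_0^\lambda+W_1^\lambda = 2$ yields
\begin{equation*}
6\,\mathcal{P}_\lambda = 2 + \bigl[\max(U_0^\lambda,V_0^\lambda,W_0^\lambda) - \min(U_0^\lambda,V_0^\lambda,W_0^\lambda)\bigr].
\end{equation*}
The per-$\lambda$ parity concealment constraints further enforce $p(0|00,\lambda)+p(0|11,\lambda) = p(0|10,\lambda)+p(0|21,\lambda) = p(0|01,\lambda)+p(0|20,\lambda) = T_\lambda/3$ and $\sum_{x} p(0|x0,\lambda)=\sum_{x} p(0|x1,\lambda)=T_\lambda/2$, where $T_\lambda=\sum_{xa} p(0|xa,\lambda)$. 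Parameterizing the encoder (e.g., by $s=p(0|00,\lambda)$, $t=p(0|10,\lambda)$, and $T_\lambda$, with the remaining probabilities determined by these constraints) and enforcing $p(0|xa,\lambda)\in[0,1]$, direct algebra shows that each pairwise difference among $(U_0^\lambda,V_0^\lambda,W_0^\lambda)$ has magnitude at most $\min(T_\lambda,6-T_\lambda)/2\leq 3/2$, with equality at $T_\lambda=3$. Hence $\mathcal{P}_\lambda\leq 7/12$.

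Averaging over $\lambda$ preserves the bound, $\mathcal{P}_{\mathrm{SR}} = \sum_\lambda \mu(\lambda)\,\mathcal{P}_\lambda \leq 7/12$, and a direct comparison with Lemma \ref{maxQ} completes the proof, since $\mathcal{P}_Q^{max} - 7/12 = \sqrt{3}/6 - 1/4 > 0$ (equivalent to $\sqrt{3}>3/2$). The hardest part will be the spread bound in the second step: the interplay between the two parity concealment constraints and the box constraints on the six encoder probabilities has to be handled carefully, ideally by case-splitting on whether $T_\lambda\lessgtr 3$ and exploiting the $p\mapsto 1-p$ symmetry ($T_\lambda \mapsto 6-T_\lambda$). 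The remaining steps---the $\xi$-optimization, the SR-averaging, and the final numerical comparison---are routine.
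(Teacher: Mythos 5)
Your proposal is correct and follows essentially the same route as the paper's Appendix D: reduce to a per-shared-randomness-value one-bit strategy, impose the parity concealment constraints for each value of the randomness, linearly optimize Bob's decoder, obtain the per-value bound $7/12$, and average. Your $\max-\min$ spread formulation replaces the paper's four-case sign analysis on $(s_b(\tau)-r_b(\tau))$, and your explicit argument that the constraints propagate to each $\lambda$ (via $\lambda$-dependent decoders) justifies a step the paper leaves implicit, but the substance of the optimization is the same.
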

The proof of the theorem is somewhat mathematical and is deferred to the Supplemental Material \cite{akp2025}. We rigorously prove that any $c$-bit communication strategy respecting the parity concealment constraints yields a maximum success probability $\left(\mathcal{P}_c\right)^{max}=7/12\approx 0.58$ of our communication game.  Now, if the success probability is considered as a figure of merit, at first glance, our results might seem incongruent with the HFW theorem. However, the game is played in a constrained scenario, as the encoding strategy adheres to the constraints specified in Eqs. (\ref{Parity_Constraint}) and (\ref{Parity_Constraint2}). 
The imposition of the parity concealment constraints impacts the classical and quantum strategies quite differently, eventually leading to qubit supremacy. 


\emph{Remark-1:-} An immediate question could be the following. Can the quantum advantage be demonstrated in the entanglement-assisted $c$-bit communication scenario, instead of qubit communication? The answer to this question is affirmative and is quite straightforward to prove.  Alice performs three projective measurements on her shared qubit and, depending on the outcome, sends one $c$-bit so that the steered states to Bob follow the parity concealment constraints. Using this $c$-bit, Bob chooses to flip or not to flip his shared qubit according to their predetermined strategy and then performs the three-outcome qubit measurement. In this way, they can obtain the same success probability as $\mathcal{P}_Q^{max}$.

\emph{Remark-2:-} In the absence of classical shared randomness, a form of qubit supremacy over c-bit is demonstrated \cite{Patra2024,Zhonghua2023,chen_banik2024}. Note that this constraint is quite stringent. Even allowing non-locality in an ontological model is not enough to simulate the quantum correlation in the absence of classical shared randomness. 
In this paper, we impose a less stringent constraint on Alice's encoding strategy to reveal the supremacy of qubit communication.

\emph{Remark-3:-} Communication advantage of qubit has already been explored in random-access-code\cite{Wiesner1983RAC,Ambainis1999RAC,Ambainis2002RAC} where Bob possesses seed randomness to choose his measurements. A constrained random-access-code scenario was proposed \cite{Spekkens2009} by imposing a parity-oblivious restriction on Alice's preparation. 
This constraint is sufficiently potent to decrease the classical success probability and facilitates a quantum advantage \cite{ghorai2018}.

\emph{Remark-4:-} A recent result \cite{saha2023pra} demonstrates that compatible quantum measurements cannot provide an advantage over classical theory in any communication scenario. This seems to contradict the findings of this paper and of \cite{Selby2023}. However, the key difference is that the communication game in \cite{saha2023pra} does not involve operational restriction in preparation.

\section{Summary and Discussion} In sum, we demonstrated a novel proof of generalized contextuality devoid of measurement incompatibility in the simplest possible scenario featuring merely a three-outcome qubit measurement. The first proof of a similar kind was demonstrated in \cite{Selby2023}, but uses a five-outcome qubit measurement. We showed that the measurement used in \cite{Selby2023} can be simulated by implementing a set of five three-outcome incompatible measurements. However, this does not undermine their proof, as there are many other practical ways to implement a specific measurement corresponding to the POVM concerned. In contrast, the three-outcome extremal qubit POVM used in this paper cannot be simulated.

The communication game introduced in this paper, when considered from the perspective of information processing, aligns with the HFW framework \cite{Holevo1973, FW_2015}. However, we imposed an operational constraint on Alice's preparation. Within this constrained HFW framework, we demonstrated the communication supremacy of a qubit over a $c$-bit, contingent upon both classical and quantum theories complying with the specified operational restrictions.

 The revelation of contextual advantage without incompatible measurements has defenestrated the belief that measurement incompatibility constitutes the most fundamental non-classical aspect of quantum measurements. This invites a deeper inquiry into what non-classicality is essential to manifest such a quantum advantage. We have demonstrated (detailed in Supplemental Material \cite{akp2025}) that POVMs capable of detecting state coherence \cite{plenioprl19} are the crucial resource to demonstrate contextuality without incompatibility in the discussed scenario. An in-depth investigation into the characterization of these POVMs is forthcoming in a future publication. A recent work \cite{jonte25} may also shed light towards this direction.

The quantum advantage in a constrained prepare-measure scenario has already been exemplified through the constraints imposed on preparations, such as parity-obliviousness \cite{Spekkens2009}, energy content \cite{VanHimbeeck2017semidevice}, and information content \cite{Tavakoli2020informationally}. It is important to observe that the quantum advantage in such a scenario emerges because of the differing impacts that specific constraints have on quantum theory compared to classical theory. In light of our results, it would be of interest to introduce various other forms of constraint to demonstrate quantum supremacy in the HFW scenario. We also conjecture that for any fixed measurement empowered with state coherence detection capability, there exist suitable constraints in preparation so that quantum supremacy in the HFW scenario can be demonstrated. Investigations following this path could be an exciting direction for future research.

 \emph{Acknowledgements:-}
P.P. acknowledges funding from the University Grants Commission (NTA Ref. No.-231610049800), Govt. of India. S.M acknowledges the local hospitality from the grant IITH/SG160 of IIT Hyderabad, India. A.K.P. acknowledges the support from the Research Grant
 SERB/CRG/2021/004258, Government of India.

\bibliography{ref}

\begin{thebibliography}{50}%
\makeatletter
\providecommand \@ifxundefined [1]{%
 \@ifx{#1\undefined}
}%
\providecommand \@ifnum [1]{%
 \ifnum #1\expandafter \@firstoftwo
 \else \expandafter \@secondoftwo
 \fi
}%
\providecommand \@ifx [1]{%
 \ifx #1\expandafter \@firstoftwo
 \else \expandafter \@secondoftwo
 \fi
}%
\providecommand \natexlab [1]{#1}%
\providecommand \enquote  [1]{``#1''}%
\providecommand \bibnamefont  [1]{#1}%
\providecommand \bibfnamefont [1]{#1}%
\providecommand \citenamefont [1]{#1}%
\providecommand \href@noop [0]{\@secondoftwo}%
\providecommand \href [0]{\begingroup \@sanitize@url \@href}%
\providecommand \@href[1]{\@@startlink{#1}\@@href}%
\providecommand \@@href[1]{\endgroup#1\@@endlink}%
\providecommand \@sanitize@url [0]{\catcode `\\12\catcode `\$12\catcode `\&12\catcode `\#12\catcode `\^12\catcode `\_12\catcode `\%12\relax}%
\providecommand \@@startlink[1]{}%
\providecommand \@@endlink[0]{}%
\providecommand \url  [0]{\begingroup\@sanitize@url \@url }%
\providecommand \@url [1]{\endgroup\@href {#1}{\urlprefix }}%
\providecommand \urlprefix  [0]{URL }%
\providecommand \Eprint [0]{\href }%
\providecommand \doibase [0]{https://doi.org/}%
\providecommand \selectlanguage [0]{\@gobble}%
\providecommand \bibinfo  [0]{\@secondoftwo}%
\providecommand \bibfield  [0]{\@secondoftwo}%
\providecommand \translation [1]{[#1]}%
\providecommand \BibitemOpen [0]{}%
\providecommand \bibitemStop [0]{}%
\providecommand \bibitemNoStop [0]{.\EOS\space}%
\providecommand \EOS [0]{\spacefactor3000\relax}%
\providecommand \BibitemShut  [1]{\csname bibitem#1\endcsname}%
\let\auto@bib@innerbib\@empty
\bibitem [{\citenamefont {Bell}(1964)}]{Bell1964}%
  \BibitemOpen
  \bibfield  {author} {\bibinfo {author} {\bibfnamefont {J.~S.}\ \bibnamefont {Bell}},\ }\bibfield  {title} {\bibinfo {title} {On the einstein podolsky rosen paradox},\ }\href {https://doi.org/10.1103/PhysicsPhysiqueFizika.1.195} {\bibfield  {journal} {\bibinfo  {journal} {Physics Physique Fizika}\ }\textbf {\bibinfo {volume} {1}},\ \bibinfo {pages} {195} (\bibinfo {year} {1964})}\BibitemShut {NoStop}%
\bibitem [{\citenamefont {Kochen}\ and\ \citenamefont {Specker}(1967)}]{KS1967}%
  \BibitemOpen
  \bibfield  {author} {\bibinfo {author} {\bibfnamefont {S.}~\bibnamefont {Kochen}}\ and\ \bibinfo {author} {\bibfnamefont {E.~P.}\ \bibnamefont {Specker}},\ }\bibfield  {title} {\bibinfo {title} {The problem of hidden variables in quantum mechanics},\ }\href {http://www.jstor.org/stable/24902153} {\bibfield  {journal} {\bibinfo  {journal} {Journal of Mathematics and Mechanics}\ }\textbf {\bibinfo {volume} {17}},\ \bibinfo {pages} {59} (\bibinfo {year} {1967})}\BibitemShut {NoStop}%
\bibitem [{\citenamefont {Spekkens}(2005)}]{Spekkens_Contextuality2005}%
  \BibitemOpen
  \bibfield  {author} {\bibinfo {author} {\bibfnamefont {R.~W.}\ \bibnamefont {Spekkens}},\ }\bibfield  {title} {\bibinfo {title} {Contextuality for preparations, transformations, and unsharp measurements},\ }\href {https://doi.org/10.1103/PhysRevA.71.052108} {\bibfield  {journal} {\bibinfo  {journal} {Phys. Rev. A}\ }\textbf {\bibinfo {volume} {71}},\ \bibinfo {pages} {052108} (\bibinfo {year} {2005})}\BibitemShut {NoStop}%
\bibitem [{\citenamefont {Flatt}\ \emph {et~al.}(2022)\citenamefont {Flatt}, \citenamefont {Lee}, \citenamefont {Carceller}, \citenamefont {Brask},\ and\ \citenamefont {Bae}}]{Flatt2022}%
  \BibitemOpen
  \bibfield  {author} {\bibinfo {author} {\bibfnamefont {K.}~\bibnamefont {Flatt}}, \bibinfo {author} {\bibfnamefont {H.}~\bibnamefont {Lee}}, \bibinfo {author} {\bibfnamefont {C.~R.~I.}\ \bibnamefont {Carceller}}, \bibinfo {author} {\bibfnamefont {J.~B.}\ \bibnamefont {Brask}},\ and\ \bibinfo {author} {\bibfnamefont {J.}~\bibnamefont {Bae}},\ }\bibfield  {title} {\bibinfo {title} {Contextual advantages and certification for maximum-confidence discrimination},\ }\href {https://doi.org/10.1103/PRXQuantum.3.030337} {\bibfield  {journal} {\bibinfo  {journal} {PRX Quantum}\ }\textbf {\bibinfo {volume} {3}},\ \bibinfo {pages} {030337} (\bibinfo {year} {2022})}\BibitemShut {NoStop}%
\bibitem [{\citenamefont {Mukherjee}\ \emph {et~al.}(2022)\citenamefont {Mukherjee}, \citenamefont {Naonit},\ and\ \citenamefont {Pan}}]{Mukherjee2022}%
  \BibitemOpen
  \bibfield  {author} {\bibinfo {author} {\bibfnamefont {S.}~\bibnamefont {Mukherjee}}, \bibinfo {author} {\bibfnamefont {S.}~\bibnamefont {Naonit}},\ and\ \bibinfo {author} {\bibfnamefont {A.~K.}\ \bibnamefont {Pan}},\ }\bibfield  {title} {\bibinfo {title} {Discriminating three mirror-symmetric states with a restricted contextual advantage},\ }\href {https://doi.org/10.1103/PhysRevA.106.012216} {\bibfield  {journal} {\bibinfo  {journal} {Phys. Rev. A}\ }\textbf {\bibinfo {volume} {106}},\ \bibinfo {pages} {012216} (\bibinfo {year} {2022})}\BibitemShut {NoStop}%
\bibitem [{\citenamefont {Spekkens}\ \emph {et~al.}(2009)\citenamefont {Spekkens}, \citenamefont {Buzacott}, \citenamefont {Keehn}, \citenamefont {Toner},\ and\ \citenamefont {Pryde}}]{Spekkens2009}%
  \BibitemOpen
  \bibfield  {author} {\bibinfo {author} {\bibfnamefont {R.~W.}\ \bibnamefont {Spekkens}}, \bibinfo {author} {\bibfnamefont {D.~H.}\ \bibnamefont {Buzacott}}, \bibinfo {author} {\bibfnamefont {A.~J.}\ \bibnamefont {Keehn}}, \bibinfo {author} {\bibfnamefont {B.}~\bibnamefont {Toner}},\ and\ \bibinfo {author} {\bibfnamefont {G.~J.}\ \bibnamefont {Pryde}},\ }\bibfield  {title} {\bibinfo {title} {Preparation contextuality powers parity-oblivious multiplexing},\ }\href {https://doi.org/10.1103/PhysRevLett.102.010401} {\bibfield  {journal} {\bibinfo  {journal} {Phys. Rev. Lett.}\ }\textbf {\bibinfo {volume} {102}},\ \bibinfo {pages} {010401} (\bibinfo {year} {2009})}\BibitemShut {NoStop}%
\bibitem [{\citenamefont {Schmid}\ \emph {et~al.}(2022)\citenamefont {Schmid}, \citenamefont {Du}, \citenamefont {Selby},\ and\ \citenamefont {Pusey}}]{Schmid2022}%
  \BibitemOpen
  \bibfield  {author} {\bibinfo {author} {\bibfnamefont {D.}~\bibnamefont {Schmid}}, \bibinfo {author} {\bibfnamefont {H.}~\bibnamefont {Du}}, \bibinfo {author} {\bibfnamefont {J.~H.}\ \bibnamefont {Selby}},\ and\ \bibinfo {author} {\bibfnamefont {M.~F.}\ \bibnamefont {Pusey}},\ }\bibfield  {title} {\bibinfo {title} {Uniqueness of noncontextual models for stabilizer subtheories},\ }\href {https://doi.org/10.1103/PhysRevLett.129.120403} {\bibfield  {journal} {\bibinfo  {journal} {Phys. Rev. Lett.}\ }\textbf {\bibinfo {volume} {129}},\ \bibinfo {pages} {120403} (\bibinfo {year} {2022})}\BibitemShut {NoStop}%
\bibitem [{\citenamefont {Saha}\ and\ \citenamefont {Chaturvedi}(2019)}]{Saha2019}%
  \BibitemOpen
  \bibfield  {author} {\bibinfo {author} {\bibfnamefont {D.}~\bibnamefont {Saha}}\ and\ \bibinfo {author} {\bibfnamefont {A.}~\bibnamefont {Chaturvedi}},\ }\bibfield  {title} {\bibinfo {title} {Preparation contextuality as an essential feature underlying quantum communication advantage},\ }\href {https://doi.org/10.1103/PhysRevA.100.022108} {\bibfield  {journal} {\bibinfo  {journal} {Phys. Rev. A}\ }\textbf {\bibinfo {volume} {100}},\ \bibinfo {pages} {022108} (\bibinfo {year} {2019})}\BibitemShut {NoStop}%
\bibitem [{\citenamefont {Pan}(2019)}]{Pan2019}%
  \BibitemOpen
  \bibfield  {author} {\bibinfo {author} {\bibfnamefont {A.~K.}\ \bibnamefont {Pan}},\ }\bibfield  {title} {\bibinfo {title} {Revealing universal quantum contextuality through communication games},\ }\href {https://doi.org/10.1038/s41598-019-53701-5} {\bibfield  {journal} {\bibinfo  {journal} {Scientific Reports}\ }\textbf {\bibinfo {volume} {9}},\ \bibinfo {pages} {17631} (\bibinfo {year} {2019})}\BibitemShut {NoStop}%
\bibitem [{\citenamefont {Schmid}\ and\ \citenamefont {Spekkens}(2018)}]{schmid18dis}%
  \BibitemOpen
  \bibfield  {author} {\bibinfo {author} {\bibfnamefont {D.}~\bibnamefont {Schmid}}\ and\ \bibinfo {author} {\bibfnamefont {R.~W.}\ \bibnamefont {Spekkens}},\ }\bibfield  {title} {\bibinfo {title} {Contextual advantage for state discrimination},\ }\href {https://doi.org/10.1103/PhysRevX.8.011015} {\bibfield  {journal} {\bibinfo  {journal} {Phys. Rev. X}\ }\textbf {\bibinfo {volume} {8}},\ \bibinfo {pages} {011015} (\bibinfo {year} {2018})}\BibitemShut {NoStop}%
\bibitem [{\citenamefont {Schmid}\ \emph {et~al.}(2021)\citenamefont {Schmid}, \citenamefont {Selby}, \citenamefont {Wolfe}, \citenamefont {Kunjwal},\ and\ \citenamefont {Spekkens}}]{schmidgptconPRX}%
  \BibitemOpen
  \bibfield  {author} {\bibinfo {author} {\bibfnamefont {D.}~\bibnamefont {Schmid}}, \bibinfo {author} {\bibfnamefont {J.~H.}\ \bibnamefont {Selby}}, \bibinfo {author} {\bibfnamefont {E.}~\bibnamefont {Wolfe}}, \bibinfo {author} {\bibfnamefont {R.}~\bibnamefont {Kunjwal}},\ and\ \bibinfo {author} {\bibfnamefont {R.~W.}\ \bibnamefont {Spekkens}},\ }\bibfield  {title} {\bibinfo {title} {Characterization of noncontextuality in the framework of generalized probabilistic theories},\ }\href {https://doi.org/10.1103/PRXQuantum.2.010331} {\bibfield  {journal} {\bibinfo  {journal} {PRX Quantum}\ }\textbf {\bibinfo {volume} {2}},\ \bibinfo {pages} {010331} (\bibinfo {year} {2021})}\BibitemShut {NoStop}%
\bibitem [{\citenamefont {Lostaglio}\ and\ \citenamefont {Senno}(2020)}]{lostaglio2020}%
  \BibitemOpen
  \bibfield  {author} {\bibinfo {author} {\bibfnamefont {M.}~\bibnamefont {Lostaglio}}\ and\ \bibinfo {author} {\bibfnamefont {G.}~\bibnamefont {Senno}},\ }\bibfield  {title} {\bibinfo {title} {Contextual advantage for state-dependent cloning},\ }\bibfield  {journal} {\bibinfo  {journal} {Quantum}\ }\href {https://doi.org/10.22331/q-2020-04-27-258} {10.22331/q-2020-04-27-258} (\bibinfo {year} {2020})\BibitemShut {NoStop}%
\bibitem [{\citenamefont {Rossi}\ \emph {et~al.}(2023)\citenamefont {Rossi}, \citenamefont {Schmid}, \citenamefont {Selby},\ and\ \citenamefont {Sainz}}]{dephasingContx23}%
  \BibitemOpen
  \bibfield  {author} {\bibinfo {author} {\bibfnamefont {V.~P.}\ \bibnamefont {Rossi}}, \bibinfo {author} {\bibfnamefont {D.}~\bibnamefont {Schmid}}, \bibinfo {author} {\bibfnamefont {J.~H.}\ \bibnamefont {Selby}},\ and\ \bibinfo {author} {\bibfnamefont {A.~B.}\ \bibnamefont {Sainz}},\ }\bibfield  {title} {\bibinfo {title} {Contextuality with vanishing coherence and maximal robustness to dephasing},\ }\href {https://doi.org/10.1103/PhysRevA.108.032213} {\bibfield  {journal} {\bibinfo  {journal} {Phys. Rev. A}\ }\textbf {\bibinfo {volume} {108}},\ \bibinfo {pages} {032213} (\bibinfo {year} {2023})}\BibitemShut {NoStop}%
\bibitem [{\citenamefont {S.~S.}\ \emph {et~al.}(2023)\citenamefont {S.~S.}, \citenamefont {Mukherjee},\ and\ \citenamefont {Pan}}]{abhyoudai23}%
  \BibitemOpen
  \bibfield  {author} {\bibinfo {author} {\bibfnamefont {A.}~\bibnamefont {S.~S.}}, \bibinfo {author} {\bibfnamefont {S.}~\bibnamefont {Mukherjee}},\ and\ \bibinfo {author} {\bibfnamefont {A.~K.}\ \bibnamefont {Pan}},\ }\bibfield  {title} {\bibinfo {title} {Robust certification of unsharp instruments through sequential quantum advantages in a prepare-measure communication game},\ }\href {https://doi.org/10.1103/PhysRevA.107.012411} {\bibfield  {journal} {\bibinfo  {journal} {Phys. Rev. A}\ }\textbf {\bibinfo {volume} {107}},\ \bibinfo {pages} {012411} (\bibinfo {year} {2023})}\BibitemShut {NoStop}%
\bibitem [{\citenamefont {Lostaglio}(2020)}]{lostaglio20}%
  \BibitemOpen
  \bibfield  {author} {\bibinfo {author} {\bibfnamefont {M.}~\bibnamefont {Lostaglio}},\ }\bibfield  {title} {\bibinfo {title} {Certifying quantum signatures in thermodynamics and metrology via contextuality of quantum linear response},\ }\href {https://doi.org/10.1103/PhysRevLett.125.230603} {\bibfield  {journal} {\bibinfo  {journal} {Phys. Rev. Lett.}\ }\textbf {\bibinfo {volume} {125}},\ \bibinfo {pages} {230603} (\bibinfo {year} {2020})}\BibitemShut {NoStop}%
\bibitem [{\citenamefont {Wolf}\ \emph {et~al.}(2009)\citenamefont {Wolf}, \citenamefont {Perez-Garcia},\ and\ \citenamefont {Fernandez}}]{wolf09}%
  \BibitemOpen
  \bibfield  {author} {\bibinfo {author} {\bibfnamefont {M.~M.}\ \bibnamefont {Wolf}}, \bibinfo {author} {\bibfnamefont {D.}~\bibnamefont {Perez-Garcia}},\ and\ \bibinfo {author} {\bibfnamefont {C.}~\bibnamefont {Fernandez}},\ }\bibfield  {title} {\bibinfo {title} {Measurements incompatible in quantum theory cannot be measured jointly in any other no-signaling theory},\ }\href {https://doi.org/10.1103/PhysRevLett.103.230402} {\bibfield  {journal} {\bibinfo  {journal} {Phys. Rev. Lett.}\ }\textbf {\bibinfo {volume} {103}},\ \bibinfo {pages} {230402} (\bibinfo {year} {2009})}\BibitemShut {NoStop}%
\bibitem [{\citenamefont {Bene}\ and\ \citenamefont {Vértesi}(2018)}]{Bene18}%
  \BibitemOpen
  \bibfield  {author} {\bibinfo {author} {\bibfnamefont {E.}~\bibnamefont {Bene}}\ and\ \bibinfo {author} {\bibfnamefont {T.}~\bibnamefont {Vértesi}},\ }\bibfield  {title} {\bibinfo {title} {Measurement incompatibility does not give rise to bell violation in general},\ }\href {https://doi.org/10.1088/1367-2630/aa9ca3} {\bibfield  {journal} {\bibinfo  {journal} {New Journal of Physics}\ }\textbf {\bibinfo {volume} {20}},\ \bibinfo {pages} {013021} (\bibinfo {year} {2018})}\BibitemShut {NoStop}%
\bibitem [{\citenamefont {Hirsch}\ \emph {et~al.}(2018)\citenamefont {Hirsch}, \citenamefont {Quintino},\ and\ \citenamefont {Brunner}}]{hirsch18}%
  \BibitemOpen
  \bibfield  {author} {\bibinfo {author} {\bibfnamefont {F.}~\bibnamefont {Hirsch}}, \bibinfo {author} {\bibfnamefont {M.~T.}\ \bibnamefont {Quintino}},\ and\ \bibinfo {author} {\bibfnamefont {N.}~\bibnamefont {Brunner}},\ }\bibfield  {title} {\bibinfo {title} {Quantum measurement incompatibility does not imply bell nonlocality},\ }\href {https://doi.org/10.1103/PhysRevA.97.012129} {\bibfield  {journal} {\bibinfo  {journal} {Phys. Rev. A}\ }\textbf {\bibinfo {volume} {97}},\ \bibinfo {pages} {012129} (\bibinfo {year} {2018})}\BibitemShut {NoStop}%
\bibitem [{\citenamefont {Quintino}\ \emph {et~al.}(2014)\citenamefont {Quintino}, \citenamefont {V\'ertesi},\ and\ \citenamefont {Brunner}}]{quintino16}%
  \BibitemOpen
  \bibfield  {author} {\bibinfo {author} {\bibfnamefont {M.~T.}\ \bibnamefont {Quintino}}, \bibinfo {author} {\bibfnamefont {T.}~\bibnamefont {V\'ertesi}},\ and\ \bibinfo {author} {\bibfnamefont {N.}~\bibnamefont {Brunner}},\ }\bibfield  {title} {\bibinfo {title} {Joint measurability, einstein-podolsky-rosen steering, and bell nonlocality},\ }\href {https://doi.org/10.1103/PhysRevLett.113.160402} {\bibfield  {journal} {\bibinfo  {journal} {Phys. Rev. Lett.}\ }\textbf {\bibinfo {volume} {113}},\ \bibinfo {pages} {160402} (\bibinfo {year} {2014})}\BibitemShut {NoStop}%
\bibitem [{\citenamefont {Uola}\ \emph {et~al.}(2015)\citenamefont {Uola}, \citenamefont {Budroni}, \citenamefont {G\"uhne},\ and\ \citenamefont {Pellonp\"a\"a}}]{uola16}%
  \BibitemOpen
  \bibfield  {author} {\bibinfo {author} {\bibfnamefont {R.}~\bibnamefont {Uola}}, \bibinfo {author} {\bibfnamefont {C.}~\bibnamefont {Budroni}}, \bibinfo {author} {\bibfnamefont {O.}~\bibnamefont {G\"uhne}},\ and\ \bibinfo {author} {\bibfnamefont {J.-P.}\ \bibnamefont {Pellonp\"a\"a}},\ }\bibfield  {title} {\bibinfo {title} {One-to-one mapping between steering and joint measurability problems},\ }\href {https://doi.org/10.1103/PhysRevLett.115.230402} {\bibfield  {journal} {\bibinfo  {journal} {Phys. Rev. Lett.}\ }\textbf {\bibinfo {volume} {115}},\ \bibinfo {pages} {230402} (\bibinfo {year} {2015})}\BibitemShut {NoStop}%
\bibitem [{\citenamefont {Mukherjee}\ and\ \citenamefont {Pan}(2024)}]{mukherjee24}%
  \BibitemOpen
  \bibfield  {author} {\bibinfo {author} {\bibfnamefont {S.}~\bibnamefont {Mukherjee}}\ and\ \bibinfo {author} {\bibfnamefont {A.~K.}\ \bibnamefont {Pan}},\ }\bibfield  {title} {\bibinfo {title} {Constrained measurement incompatibility from generalised contextuality of steered preparation},\ }\href {https://doi.org/10.1088/1367-2630/ad96d8} {\bibfield  {journal} {\bibinfo  {journal} {New Journal of Physics}\ }\textbf {\bibinfo {volume} {26}},\ \bibinfo {pages} {123014} (\bibinfo {year} {2024})}\BibitemShut {NoStop}%
\bibitem [{\citenamefont {Selby}\ \emph {et~al.}(2023)\citenamefont {Selby}, \citenamefont {Schmid}, \citenamefont {Wolfe}, \citenamefont {Sainz}, \citenamefont {Kunjwal},\ and\ \citenamefont {Spekkens}}]{Selby2023}%
  \BibitemOpen
  \bibfield  {author} {\bibinfo {author} {\bibfnamefont {J.~H.}\ \bibnamefont {Selby}}, \bibinfo {author} {\bibfnamefont {D.}~\bibnamefont {Schmid}}, \bibinfo {author} {\bibfnamefont {E.}~\bibnamefont {Wolfe}}, \bibinfo {author} {\bibfnamefont {A.~B.}\ \bibnamefont {Sainz}}, \bibinfo {author} {\bibfnamefont {R.}~\bibnamefont {Kunjwal}},\ and\ \bibinfo {author} {\bibfnamefont {R.~W.}\ \bibnamefont {Spekkens}},\ }\bibfield  {title} {\bibinfo {title} {Contextuality without incompatibility},\ }\href {https://doi.org/10.1103/PhysRevLett.130.230201} {\bibfield  {journal} {\bibinfo  {journal} {Phys. Rev. Lett.}\ }\textbf {\bibinfo {volume} {130}},\ \bibinfo {pages} {230201} (\bibinfo {year} {2023})}\BibitemShut {NoStop}%
\bibitem [{\citenamefont {Singh}\ \emph {et~al.}(2023)\citenamefont {Singh}, \citenamefont {Bhati},\ and\ \citenamefont {Arvind}}]{j.Singh2023}%
  \BibitemOpen
  \bibfield  {author} {\bibinfo {author} {\bibfnamefont {J.}~\bibnamefont {Singh}}, \bibinfo {author} {\bibfnamefont {R.~S.}\ \bibnamefont {Bhati}},\ and\ \bibinfo {author} {\bibnamefont {Arvind}},\ }\bibfield  {title} {\bibinfo {title} {Revealing quantum contextuality using a single measurement device},\ }\href {https://doi.org/10.1103/PhysRevA.107.012201} {\bibfield  {journal} {\bibinfo  {journal} {Phys. Rev. A}\ }\textbf {\bibinfo {volume} {107}},\ \bibinfo {pages} {012201} (\bibinfo {year} {2023})}\BibitemShut {NoStop}%
\bibitem [{\citenamefont {Holevo}(1973)}]{Holevo1973}%
  \BibitemOpen
  \bibfield  {author} {\bibinfo {author} {\bibfnamefont {A.~S.}\ \bibnamefont {Holevo}},\ }\bibfield  {title} {\bibinfo {title} {Bounds for the quantity of information transmitted by a quantum communication channel},\ }\href {http://mathscinet.ams.org/mathscinet-getitem?mr=456936} {\bibfield  {journal} {\bibinfo  {journal} {Problems Inform. Transmission}\ }\textbf {\bibinfo {volume} {9}} (\bibinfo {year} {1973})}\BibitemShut {NoStop}%
\bibitem [{\citenamefont {Frenkel}\ and\ \citenamefont {Weiner}(2015)}]{FW_2015}%
  \BibitemOpen
  \bibfield  {author} {\bibinfo {author} {\bibfnamefont {P.~E.}\ \bibnamefont {Frenkel}}\ and\ \bibinfo {author} {\bibfnamefont {M.}~\bibnamefont {Weiner}},\ }\bibfield  {title} {\bibinfo {title} {Classical information storage in an n-level quantum system},\ }\href {https://doi.org/10.1007/s00220-015-2463-0} {\bibfield  {journal} {\bibinfo  {journal} {Communications in Mathematical Physics}\ }\textbf {\bibinfo {volume} {340}},\ \bibinfo {pages} {563} (\bibinfo {year} {2015})}\BibitemShut {NoStop}%
\bibitem [{\citenamefont {Harrigan}\ and\ \citenamefont {Spekkens}(2010)}]{Harrigan2010}%
  \BibitemOpen
  \bibfield  {author} {\bibinfo {author} {\bibfnamefont {N.}~\bibnamefont {Harrigan}}\ and\ \bibinfo {author} {\bibfnamefont {R.~W.}\ \bibnamefont {Spekkens}},\ }\bibfield  {title} {\bibinfo {title} {Einstein, incompleteness, and the epistemic view of quantum states},\ }\href {https://doi.org/10.1007/s10701-009-9347-0} {\bibfield  {journal} {\bibinfo  {journal} {Foundations of Physics}\ }\textbf {\bibinfo {volume} {40}},\ \bibinfo {pages} {125} (\bibinfo {year} {2010})}\BibitemShut {NoStop}%
\bibitem [{\citenamefont {Leifer}(2014)}]{Leifer_2014}%
  \BibitemOpen
  \bibfield  {author} {\bibinfo {author} {\bibfnamefont {M.}~\bibnamefont {Leifer}},\ }\bibfield  {title} {\bibinfo {title} {Is the quantum state real? an extended review of $\psi$-ontology theorems},\ }\href {https://doi.org/10.12743/quanta.v3i1.22} {\bibfield  {journal} {\bibinfo  {journal} {Quanta}\ }\textbf {\bibinfo {volume} {3}},\ \bibinfo {pages} {67} (\bibinfo {year} {2014})}\BibitemShut {NoStop}%
\bibitem [{\citenamefont {G\"uhne}\ \emph {et~al.}(2023)\citenamefont {G\"uhne}, \citenamefont {Haapasalo}, \citenamefont {Kraft}, \citenamefont {Pellonp\"a\"a},\ and\ \citenamefont {Uola}}]{Guhne2023}%
  \BibitemOpen
  \bibfield  {author} {\bibinfo {author} {\bibfnamefont {O.}~\bibnamefont {G\"uhne}}, \bibinfo {author} {\bibfnamefont {E.}~\bibnamefont {Haapasalo}}, \bibinfo {author} {\bibfnamefont {T.}~\bibnamefont {Kraft}}, \bibinfo {author} {\bibfnamefont {J.-P.}\ \bibnamefont {Pellonp\"a\"a}},\ and\ \bibinfo {author} {\bibfnamefont {R.}~\bibnamefont {Uola}},\ }\bibfield  {title} {\bibinfo {title} {Colloquium: Incompatible measurements in quantum information science},\ }\href {https://doi.org/10.1103/RevModPhys.95.011003} {\bibfield  {journal} {\bibinfo  {journal} {Rev. Mod. Phys.}\ }\textbf {\bibinfo {volume} {95}},\ \bibinfo {pages} {011003} (\bibinfo {year} {2023})}\BibitemShut {NoStop}%
\bibitem [{\citenamefont {Oszmaniec}\ \emph {et~al.}(2017)\citenamefont {Oszmaniec}, \citenamefont {Guerini}, \citenamefont {Wittek},\ and\ \citenamefont {Ac\'{\i}n}}]{ozmaniac17}%
  \BibitemOpen
  \bibfield  {author} {\bibinfo {author} {\bibfnamefont {M.}~\bibnamefont {Oszmaniec}}, \bibinfo {author} {\bibfnamefont {L.}~\bibnamefont {Guerini}}, \bibinfo {author} {\bibfnamefont {P.}~\bibnamefont {Wittek}},\ and\ \bibinfo {author} {\bibfnamefont {A.}~\bibnamefont {Ac\'{\i}n}},\ }\bibfield  {title} {\bibinfo {title} {Simulating positive-operator-valued measures with projective measurements},\ }\href {https://doi.org/10.1103/PhysRevLett.119.190501} {\bibfield  {journal} {\bibinfo  {journal} {Phys. Rev. Lett.}\ }\textbf {\bibinfo {volume} {119}},\ \bibinfo {pages} {190501} (\bibinfo {year} {2017})}\BibitemShut {NoStop}%
\bibitem [{\citenamefont {Patra}\ \emph {et~al.}(2025)\citenamefont {Patra}, \citenamefont {Mukherjee},\ and\ \citenamefont {Pan}}]{akp2025}%
  \BibitemOpen
  \bibfield  {author} {\bibinfo {author} {\bibfnamefont {P.}~\bibnamefont {Patra}}, \bibinfo {author} {\bibfnamefont {S.}~\bibnamefont {Mukherjee}},\ and\ \bibinfo {author} {\bibfnamefont {A.}~\bibnamefont {Pan}, \bibfnamefont {K.}},\ }\bibfield  {title} {\bibinfo {title} {Supplemental material},\ }\href@noop {} {\bibfield  {journal} {\bibinfo  {journal} {Phys. Rev. Lett.}\ } (\bibinfo {year} {2025})}\BibitemShut {NoStop}%
\bibitem [{\citenamefont {D’Ariano}\ \emph {et~al.}(2005)\citenamefont {D’Ariano}, \citenamefont {Presti},\ and\ \citenamefont {Perinotti}}]{D_Ariano_2005}%
  \BibitemOpen
  \bibfield  {author} {\bibinfo {author} {\bibfnamefont {G.~M.}\ \bibnamefont {D’Ariano}}, \bibinfo {author} {\bibfnamefont {P.~L.}\ \bibnamefont {Presti}},\ and\ \bibinfo {author} {\bibfnamefont {P.}~\bibnamefont {Perinotti}},\ }\bibfield  {title} {\bibinfo {title} {Classical randomness in quantum measurements},\ }\href {https://doi.org/10.1088/0305-4470/38/26/010} {\bibfield  {journal} {\bibinfo  {journal} {Journal of Physics A: Mathematical and General}\ }\textbf {\bibinfo {volume} {38}},\ \bibinfo {pages} {5979–5991} (\bibinfo {year} {2005})}\BibitemShut {NoStop}%
\bibitem [{\citenamefont {Perry}\ \emph {et~al.}(2015)\citenamefont {Perry}, \citenamefont {Jain},\ and\ \citenamefont {Oppenheim}}]{Perry2015}%
  \BibitemOpen
  \bibfield  {author} {\bibinfo {author} {\bibfnamefont {C.}~\bibnamefont {Perry}}, \bibinfo {author} {\bibfnamefont {R.}~\bibnamefont {Jain}},\ and\ \bibinfo {author} {\bibfnamefont {J.}~\bibnamefont {Oppenheim}},\ }\bibfield  {title} {\bibinfo {title} {Communication tasks with infinite quantum-classical separation},\ }\href {https://doi.org/10.1103/PhysRevLett.115.030504} {\bibfield  {journal} {\bibinfo  {journal} {Phys. Rev. Lett.}\ }\textbf {\bibinfo {volume} {115}},\ \bibinfo {pages} {030504} (\bibinfo {year} {2015})}\BibitemShut {NoStop}%
\bibitem [{\citenamefont {Harrigan}\ and\ \citenamefont {Rudolph}(2007)}]{Harrigan2007ontologicalmodelsinterpretationcontextuality}%
  \BibitemOpen
  \bibfield  {author} {\bibinfo {author} {\bibfnamefont {N.}~\bibnamefont {Harrigan}}\ and\ \bibinfo {author} {\bibfnamefont {T.}~\bibnamefont {Rudolph}},\ }\href {https://arxiv.org/abs/0709.4266} {\bibinfo {title} {Ontological models and the interpretation of contextuality}} (\bibinfo {year} {2007}),\ \Eprint {https://arxiv.org/abs/0709.4266} {arXiv:0709.4266 [quant-ph]} \BibitemShut {NoStop}%
\bibitem [{\citenamefont {Leifer}\ and\ \citenamefont {Maroney}(2013)}]{Leifer_Marony(2013)}%
  \BibitemOpen
  \bibfield  {author} {\bibinfo {author} {\bibfnamefont {M.~S.}\ \bibnamefont {Leifer}}\ and\ \bibinfo {author} {\bibfnamefont {O.~J.~E.}\ \bibnamefont {Maroney}},\ }\bibfield  {title} {\bibinfo {title} {Maximally epistemic interpretations of the quantum state and contextuality},\ }\href {https://doi.org/10.1103/PhysRevLett.110.120401} {\bibfield  {journal} {\bibinfo  {journal} {Phys. Rev. Lett.}\ }\textbf {\bibinfo {volume} {110}},\ \bibinfo {pages} {120401} (\bibinfo {year} {2013})}\BibitemShut {NoStop}%
\bibitem [{\citenamefont {Pan}(2021)}]{Pan_2021}%
  \BibitemOpen
  \bibfield  {author} {\bibinfo {author} {\bibfnamefont {A.~K.}\ \bibnamefont {Pan}},\ }\bibfield  {title} {\bibinfo {title} {Two definitions of maximally $\psi$-epistemic ontological model and preparation non-contextuality},\ }\bibfield  {journal} {\bibinfo  {journal} {Europhysics Letters}\ }\textbf {\bibinfo {volume} {133}},\ \href {https://doi.org/10.1209/0295-5075/133/50004} {10.1209/0295-5075/133/50004} (\bibinfo {year} {2021})\BibitemShut {NoStop}%
\bibitem [{\citenamefont {Maroney}(2012)}]{Maroney2012}%
  \BibitemOpen
  \bibfield  {author} {\bibinfo {author} {\bibfnamefont {O.~J.~E.}\ \bibnamefont {Maroney}},\ }\href {https://arxiv.org/abs/1207.7192} {\bibinfo {title} {A brief note on epistemic interpretations and the kochen-specker theorem}} (\bibinfo {year} {2012}),\ \Eprint {https://arxiv.org/abs/1207.7192} {arXiv:1207.7192 [quant-ph]} \BibitemShut {NoStop}%
\bibitem [{\citenamefont {Carmeli}\ \emph {et~al.}(2018)\citenamefont {Carmeli}, \citenamefont {Heinosaari},\ and\ \citenamefont {Toigo}}]{Heinosaari2018}%
  \BibitemOpen
  \bibfield  {author} {\bibinfo {author} {\bibfnamefont {C.}~\bibnamefont {Carmeli}}, \bibinfo {author} {\bibfnamefont {T.}~\bibnamefont {Heinosaari}},\ and\ \bibinfo {author} {\bibfnamefont {A.}~\bibnamefont {Toigo}},\ }\bibfield  {title} {\bibinfo {title} {State discrimination with postmeasurement information and incompatibility of quantum measurements},\ }\href {https://doi.org/10.1103/PhysRevA.98.012126} {\bibfield  {journal} {\bibinfo  {journal} {Phys. Rev. A}\ }\textbf {\bibinfo {volume} {98}},\ \bibinfo {pages} {012126} (\bibinfo {year} {2018})}\BibitemShut {NoStop}%
\bibitem [{\citenamefont {Carmeli}\ \emph {et~al.}(2019)\citenamefont {Carmeli}, \citenamefont {Heinosaari},\ and\ \citenamefont {Toigo}}]{CarmeliIW2019}%
  \BibitemOpen
  \bibfield  {author} {\bibinfo {author} {\bibfnamefont {C.}~\bibnamefont {Carmeli}}, \bibinfo {author} {\bibfnamefont {T.}~\bibnamefont {Heinosaari}},\ and\ \bibinfo {author} {\bibfnamefont {A.}~\bibnamefont {Toigo}},\ }\bibfield  {title} {\bibinfo {title} {Quantum incompatibility witnesses},\ }\href {https://doi.org/10.1103/PhysRevLett.122.130402} {\bibfield  {journal} {\bibinfo  {journal} {Phys. Rev. Lett.}\ }\textbf {\bibinfo {volume} {122}},\ \bibinfo {pages} {130402} (\bibinfo {year} {2019})}\BibitemShut {NoStop}%
\bibitem [{\citenamefont {Patra}\ \emph {et~al.}(2024)\citenamefont {Patra}, \citenamefont {Naik}, \citenamefont {Lobo}, \citenamefont {Sen}, \citenamefont {Guha}, \citenamefont {Bhattacharya}, \citenamefont {Alimuddin},\ and\ \citenamefont {Banik}}]{Patra2024}%
  \BibitemOpen
  \bibfield  {author} {\bibinfo {author} {\bibfnamefont {R.~K.}\ \bibnamefont {Patra}}, \bibinfo {author} {\bibfnamefont {S.~G.}\ \bibnamefont {Naik}}, \bibinfo {author} {\bibfnamefont {E.~P.}\ \bibnamefont {Lobo}}, \bibinfo {author} {\bibfnamefont {S.}~\bibnamefont {Sen}}, \bibinfo {author} {\bibfnamefont {T.}~\bibnamefont {Guha}}, \bibinfo {author} {\bibfnamefont {S.~S.}\ \bibnamefont {Bhattacharya}}, \bibinfo {author} {\bibfnamefont {M.}~\bibnamefont {Alimuddin}},\ and\ \bibinfo {author} {\bibfnamefont {M.}~\bibnamefont {Banik}},\ }\bibfield  {title} {\bibinfo {title} {Classical analogue of quantum superdense coding and communication advantage of a single quantum system},\ }\href {https://doi.org/10.22331/q-2024-04-09-1315} {\bibfield  {journal} {\bibinfo  {journal} {{Quantum}}\ }\textbf {\bibinfo {volume} {8}},\ \bibinfo {pages} {1315} (\bibinfo {year} {2024})}\BibitemShut {NoStop}%
\bibitem [{\citenamefont {Ma}\ \emph {et~al.}(2023)\citenamefont {Ma}, \citenamefont {Rambach}, \citenamefont {Goswami}, \citenamefont {Bhattacharya}, \citenamefont {Banik},\ and\ \citenamefont {Romero}}]{Zhonghua2023}%
  \BibitemOpen
  \bibfield  {author} {\bibinfo {author} {\bibfnamefont {Z.}~\bibnamefont {Ma}}, \bibinfo {author} {\bibfnamefont {M.}~\bibnamefont {Rambach}}, \bibinfo {author} {\bibfnamefont {K.}~\bibnamefont {Goswami}}, \bibinfo {author} {\bibfnamefont {S.~S.}\ \bibnamefont {Bhattacharya}}, \bibinfo {author} {\bibfnamefont {M.}~\bibnamefont {Banik}},\ and\ \bibinfo {author} {\bibfnamefont {J.}~\bibnamefont {Romero}},\ }\bibfield  {title} {\bibinfo {title} {Randomness-free test of nonclassicality: A proof of concept},\ }\href {https://doi.org/10.1103/PhysRevLett.131.130201} {\bibfield  {journal} {\bibinfo  {journal} {Phys. Rev. Lett.}\ }\textbf {\bibinfo {volume} {131}},\ \bibinfo {pages} {130201} (\bibinfo {year} {2023})}\BibitemShut {NoStop}%
\bibitem [{\citenamefont {Ding}\ \emph {et~al.}(2024)\citenamefont {Ding}, \citenamefont {Lobo}, \citenamefont {Alimuddin}, \citenamefont {Xu}, \citenamefont {Zhang}, \citenamefont {Banik}, \citenamefont {Bao},\ and\ \citenamefont {Huang}}]{chen_banik2024}%
  \BibitemOpen
  \bibfield  {author} {\bibinfo {author} {\bibfnamefont {C.}~\bibnamefont {Ding}}, \bibinfo {author} {\bibfnamefont {E.~P.}\ \bibnamefont {Lobo}}, \bibinfo {author} {\bibfnamefont {M.}~\bibnamefont {Alimuddin}}, \bibinfo {author} {\bibfnamefont {X.-Y.}\ \bibnamefont {Xu}}, \bibinfo {author} {\bibfnamefont {S.}~\bibnamefont {Zhang}}, \bibinfo {author} {\bibfnamefont {M.}~\bibnamefont {Banik}}, \bibinfo {author} {\bibfnamefont {W.-S.}\ \bibnamefont {Bao}},\ and\ \bibinfo {author} {\bibfnamefont {H.-L.}\ \bibnamefont {Huang}},\ }\bibfield  {title} {\bibinfo {title} {Quantum advantage: A single qubit's experimental edge in classical data storage},\ }\href {https://doi.org/10.1103/PhysRevLett.133.200201} {\bibfield  {journal} {\bibinfo  {journal} {Phys. Rev. Lett.}\ }\textbf {\bibinfo {volume} {133}},\ \bibinfo {pages} {200201} (\bibinfo {year} {2024})}\BibitemShut {NoStop}%
\bibitem [{\citenamefont {Wiesner}(1983)}]{Wiesner1983RAC}%
  \BibitemOpen
  \bibfield  {author} {\bibinfo {author} {\bibfnamefont {S.}~\bibnamefont {Wiesner}},\ }\bibfield  {title} {\bibinfo {title} {Conjugate coding},\ }\href {https://doi.org/10.1145/1008908.1008920} {\bibfield  {journal} {\bibinfo  {journal} {SIGACT News}\ }\textbf {\bibinfo {volume} {15}},\ \bibinfo {pages} {78–88} (\bibinfo {year} {1983})}\BibitemShut {NoStop}%
\bibitem [{\citenamefont {Ambainis}\ \emph {et~al.}(1999)\citenamefont {Ambainis}, \citenamefont {Nayak}, \citenamefont {Ta-Shma},\ and\ \citenamefont {Vazirani}}]{Ambainis1999RAC}%
  \BibitemOpen
  \bibfield  {author} {\bibinfo {author} {\bibfnamefont {A.}~\bibnamefont {Ambainis}}, \bibinfo {author} {\bibfnamefont {A.}~\bibnamefont {Nayak}}, \bibinfo {author} {\bibfnamefont {A.}~\bibnamefont {Ta-Shma}},\ and\ \bibinfo {author} {\bibfnamefont {U.}~\bibnamefont {Vazirani}},\ }\bibfield  {title} {\bibinfo {title} {Dense quantum coding and a lower bound for 1-way quantum automata},\ }in\ \href {https://doi.org/10.1145/301250.301347} {\emph {\bibinfo {booktitle} {Proceedings of the Thirty-First Annual ACM Symposium on Theory of Computing}}},\ \bibinfo {series and number} {STOC '99}\ (\bibinfo  {publisher} {Association for Computing Machinery},\ \bibinfo {address} {New York, NY, USA},\ \bibinfo {year} {1999})\BibitemShut {NoStop}%
\bibitem [{\citenamefont {Ambainis}\ \emph {et~al.}(2002)\citenamefont {Ambainis}, \citenamefont {Nayak}, \citenamefont {Ta-Shma},\ and\ \citenamefont {Vazirani}}]{Ambainis2002RAC}%
  \BibitemOpen
  \bibfield  {author} {\bibinfo {author} {\bibfnamefont {A.}~\bibnamefont {Ambainis}}, \bibinfo {author} {\bibfnamefont {A.}~\bibnamefont {Nayak}}, \bibinfo {author} {\bibfnamefont {A.}~\bibnamefont {Ta-Shma}},\ and\ \bibinfo {author} {\bibfnamefont {U.}~\bibnamefont {Vazirani}},\ }\bibfield  {title} {\bibinfo {title} {Dense quantum coding and quantum finite automata},\ }\href {https://doi.org/10.1145/581771.581773} {\bibfield  {journal} {\bibinfo  {journal} {J. ACM}\ }\textbf {\bibinfo {volume} {49}},\ \bibinfo {pages} {496–511} (\bibinfo {year} {2002})}\BibitemShut {NoStop}%
\bibitem [{\citenamefont {Ghorai}\ and\ \citenamefont {Pan}(2018)}]{ghorai2018}%
  \BibitemOpen
  \bibfield  {author} {\bibinfo {author} {\bibfnamefont {S.}~\bibnamefont {Ghorai}}\ and\ \bibinfo {author} {\bibfnamefont {A.~K.}\ \bibnamefont {Pan}},\ }\bibfield  {title} {\bibinfo {title} {Optimal quantum preparation contextuality in an $n$-bit parity-oblivious multiplexing task},\ }\href {https://doi.org/10.1103/PhysRevA.98.032110} {\bibfield  {journal} {\bibinfo  {journal} {Phys. Rev. A}\ }\textbf {\bibinfo {volume} {98}},\ \bibinfo {pages} {032110} (\bibinfo {year} {2018})}\BibitemShut {NoStop}%
\bibitem [{\citenamefont {Saha}\ \emph {et~al.}(2023)\citenamefont {Saha}, \citenamefont {Das}, \citenamefont {Das}, \citenamefont {Bhattacharya},\ and\ \citenamefont {Majumdar}}]{saha2023pra}%
  \BibitemOpen
  \bibfield  {author} {\bibinfo {author} {\bibfnamefont {D.}~\bibnamefont {Saha}}, \bibinfo {author} {\bibfnamefont {D.}~\bibnamefont {Das}}, \bibinfo {author} {\bibfnamefont {A.~K.}\ \bibnamefont {Das}}, \bibinfo {author} {\bibfnamefont {B.}~\bibnamefont {Bhattacharya}},\ and\ \bibinfo {author} {\bibfnamefont {A.~S.}\ \bibnamefont {Majumdar}},\ }\bibfield  {title} {\bibinfo {title} {Measurement incompatibility and quantum advantage in communication},\ }\href {https://doi.org/10.1103/PhysRevA.107.062210} {\bibfield  {journal} {\bibinfo  {journal} {Phys. Rev. A}\ }\textbf {\bibinfo {volume} {107}},\ \bibinfo {pages} {062210} (\bibinfo {year} {2023})}\BibitemShut {NoStop}%
\bibitem [{\citenamefont {Theurer}\ \emph {et~al.}(2019)\citenamefont {Theurer}, \citenamefont {Egloff}, \citenamefont {Zhang},\ and\ \citenamefont {Plenio}}]{plenioprl19}%
  \BibitemOpen
  \bibfield  {author} {\bibinfo {author} {\bibfnamefont {T.}~\bibnamefont {Theurer}}, \bibinfo {author} {\bibfnamefont {D.}~\bibnamefont {Egloff}}, \bibinfo {author} {\bibfnamefont {L.}~\bibnamefont {Zhang}},\ and\ \bibinfo {author} {\bibfnamefont {M.~B.}\ \bibnamefont {Plenio}},\ }\bibfield  {title} {\bibinfo {title} {Quantifying operations with an application to coherence},\ }\href {https://doi.org/10.1103/PhysRevLett.122.190405} {\bibfield  {journal} {\bibinfo  {journal} {Phys. Rev. Lett.}\ }\textbf {\bibinfo {volume} {122}},\ \bibinfo {pages} {190405} (\bibinfo {year} {2019})}\BibitemShut {NoStop}%
\bibitem [{\citenamefont {Hance}\ \emph {et~al.}(2025)\citenamefont {Hance}, \citenamefont {Ji}, \citenamefont {Matsushita},\ and\ \citenamefont {Hofmann}}]{jonte25}%
  \BibitemOpen
  \bibfield  {author} {\bibinfo {author} {\bibfnamefont {J.~R.}\ \bibnamefont {Hance}}, \bibinfo {author} {\bibfnamefont {M.}~\bibnamefont {Ji}}, \bibinfo {author} {\bibfnamefont {T.}~\bibnamefont {Matsushita}},\ and\ \bibinfo {author} {\bibfnamefont {H.~F.}\ \bibnamefont {Hofmann}},\ }\href {https://arxiv.org/abs/2501.04664} {\bibinfo {title} {External quantum fluctuations select measurement contexts}} (\bibinfo {year} {2025}),\ \Eprint {https://arxiv.org/abs/2501.04664} {arXiv:2501.04664 [quant-ph]} \BibitemShut {NoStop}%
\bibitem [{\citenamefont {Van~Himbeeck}\ \emph {et~al.}(2017)\citenamefont {Van~Himbeeck}, \citenamefont {Woodhead}, \citenamefont {Cerf}, \citenamefont {Garc{\'{i}}a-Patr{\'{o}}n},\ and\ \citenamefont {Pironio}}]{VanHimbeeck2017semidevice}%
  \BibitemOpen
  \bibfield  {author} {\bibinfo {author} {\bibfnamefont {T.}~\bibnamefont {Van~Himbeeck}}, \bibinfo {author} {\bibfnamefont {E.}~\bibnamefont {Woodhead}}, \bibinfo {author} {\bibfnamefont {N.~J.}\ \bibnamefont {Cerf}}, \bibinfo {author} {\bibfnamefont {R.}~\bibnamefont {Garc{\'{i}}a-Patr{\'{o}}n}},\ and\ \bibinfo {author} {\bibfnamefont {S.}~\bibnamefont {Pironio}},\ }\bibfield  {title} {\bibinfo {title} {Semi-device-independent framework based on natural physical assumptions},\ }\href {https://doi.org/10.22331/q-2017-11-18-33} {\bibfield  {journal} {\bibinfo  {journal} {{Quantum}}\ }\textbf {\bibinfo {volume} {1}},\ \bibinfo {pages} {33} (\bibinfo {year} {2017})}\BibitemShut {NoStop}%
\bibitem [{\citenamefont {Tavakoli}\ \emph {et~al.}(2020)\citenamefont {Tavakoli}, \citenamefont {Zambrini~Cruzeiro}, \citenamefont {Bohr~Brask}, \citenamefont {Gisin},\ and\ \citenamefont {Brunner}}]{Tavakoli2020informationally}%
  \BibitemOpen
  \bibfield  {author} {\bibinfo {author} {\bibfnamefont {A.}~\bibnamefont {Tavakoli}}, \bibinfo {author} {\bibfnamefont {E.}~\bibnamefont {Zambrini~Cruzeiro}}, \bibinfo {author} {\bibfnamefont {J.}~\bibnamefont {Bohr~Brask}}, \bibinfo {author} {\bibfnamefont {N.}~\bibnamefont {Gisin}},\ and\ \bibinfo {author} {\bibfnamefont {N.}~\bibnamefont {Brunner}},\ }\bibfield  {title} {\bibinfo {title} {Informationally restricted quantum correlations},\ }\href {https://doi.org/10.22331/q-2020-09-24-332} {\bibfield  {journal} {\bibinfo  {journal} {{Quantum}}\ }\textbf {\bibinfo {volume} {4}},\ \bibinfo {pages} {332} (\bibinfo {year} {2020})}\BibitemShut {NoStop}%
\end{thebibliography}%
\newpage
\begin{widetext}
    \appendix
\section{Detailed proof of Lemma 1: Derivation of maximum quantum success probability  $\mathcal{P}_{Q}^{max}$}\label{qo}
Without loss of generality consider that Alice prepares the following six qubit states 
\ba\label{P1}
\rho_{x0}=\frac{\mathds{I}+ A_x}{2}; \ \ \  \rho_{x1}=\frac{\mathds{I} - A_{x \oplus 2}}{2}, \ \ x \in \{1,2,3\}
\ea
satisfying the restriction of Eq.(4b) of the main text, so that $\half(\rho_{x0}+\rho_{x1})\equiv \rho_1, \ \forall x$.  To satisfy the  restriction of Eq.(4a) of the main text, we need
\ba\label{P2}
\sum_{x=0,1,2}\frac{1}{3}\rho_{x0}=\sum_{x=0,1,2} \frac{1}{3}\rho_{x1}\equiv \rho_{2} 
\ea
which implies that
\ba\label{Cnstrn 2_Observable  form}
 A_0+A_1+A_2=0 
\ea
    
Bob chooses a general three-outcome qubit POVM as given by,
\ba
E_0=\alpha_0\frac{\mathds{I}+ B_0}{2};\ \ E_1= \alpha_1\frac{\mathds{I}+ B_1}{2};\ \ E_2=\alpha_2\frac{\mathds{I}+ B_2}{2} 
\ea
The completeness relation of measurement, $E_0+E_1+E_2=\mathds{I}$ implies
\begin{eqnarray}\label{Condition of Bob's Observables}
        \alpha_0B_0+\alpha_1B_1 +\alpha_2 B_2=0 , \ \ \ \text{and} \ \  \alpha_0+\alpha_1+\alpha_2=2
\end{eqnarray}
Now, using Eqs.\eqref{P1} and \eqref{P2} , the quantum success probability in Eq.(5) of the main text can be explicitly written as,

\ba\label{Avg_Succ_11}
\mathcal{P}_{Q}&&=\frac{1}{6}\Tr[(\rho_{00}+\rho_{21})E_0+ (\rho_{10}+ \rho_{01})E_1+ (\rho_{20}+ \rho_{11})E_2] \nonumber\\
&& = \frac{1}{6}\Tr[\frac{\alpha_0}{4}(2\mathds{I}+ A_0- A_1)( \mathds{I}+B_0)+ \frac{\alpha_1}{4}(2\mathds{I}+ A_1- A_2)(\mathds{I}+B_1) + \frac{\alpha_2}{4} (2\mathds{I}+ A_2- A_0)(\mathds{I}+B_2)].
\ea\label{Avg_Succ_12}
Here we define three normalization constants $\omega_0=||A_0 - A_1||,\omega_1=||A_1 - A_2||,\omega_2=||A_2 - A_0||$, \Big($||\mathcal{N}||$ is Frobenius norm of a matrix $\mathcal{N}$\Big), such that  Eq.\eqref{Avg_Succ_11} can be rewritten as,
\ba
\label{pqq}
\mathcal{P}_{Q}= \frac{1}{3}+ \frac{1}{24}\Tr[\alpha_0\frac{A_0-A_1}{\omega_0}B_0\omega_0 +  \alpha_1 \frac{A_1-A_2}{\omega_1} B_1 \omega_1 +  \alpha_2 \frac{A_2- A_0}{\omega_2}B_2\omega_2].
\ea

It is now evident from Eq.\eqref{pqq} that the maximization of $\mathcal{P}_{Q}$ demands,
\ba
B_0=\frac{A_0 - A_1}{\omega_0}; \  B_1=\frac{A_1 - A_2}{\omega_1}; \ 
 B_2=\frac{A_2-A_0}{\omega_2}. \nonumber
\ea
Since $(B_{y})^{2}=\mathbb{I}$, we get
\ba\label{A8}
\mathcal{P}_{Q} &&\leq \frac{1}{3} + \frac{1}{12}\max[\alpha_0\omega_0 + \alpha_1 \omega_1 + \alpha_2 \omega_2]\\ \nonumber
&&= \frac{1}{3}+ \frac{1}{12}\max\left[\alpha_0\sqrt{\langle 2\mathds{I} -\{A_0,A_1\}\rangle} +\alpha_1\sqrt{\langle 2\mathds{I} - \{A_1,A_2\}\rangle} +\alpha_2\sqrt{\langle 2\mathds{I} -\{A_2,A_0\} \rangle} \right]
\ea

Again form Eq.\eqref{Cnstrn 2_Observable  form} we have,
\begin{eqnarray}\label{cnstrn 1}
    3\mathds{I}+\{A_0 ,A_1\}+\{A_1 ,A_2\}+\{A_2,A_0\}=0
\end{eqnarray} 
Multiply $A_0$ from the right of Eq. \eqref{Cnstrn 2_Observable  form} followed by multiplying same $A_0$ from the left and then summing both equations we obtain, 
\begin{eqnarray} \label{cnstrn 2}
    2\mathds{I} + \{A_0 ,A_1\}+\{A_0 ,A_2\} =0
\end{eqnarray}
Comparing both the Eqs.\eqref{cnstrn 1} -\eqref{cnstrn 2} we get $\langle\{A_1 ,A_2\}\rangle=-1$. Similarly, we can obtain $\langle\{A_0 ,A_1\}\rangle=-1$ and $\langle\{A_0 ,A_2\}\rangle=-1$, following  the constrain Eq.\eqref{Cnstrn 2_Observable  form}. Those relations indicate $\omega_0=\omega_1=\omega_2=1/\sqrt{3}$.  Substituting those in Eq.\eqref{A8} we obtain the bound for success probability as,
\begin{eqnarray}
    \mathcal{P}_{Q}\leq \frac{1}{3}\left(1+ \frac{\sqrt{3}}{2} \right)
\end{eqnarray}

Here, Bob's observables corresponding to the optimal quantum strategy is given by,
\begin{eqnarray}\label{Bob's Observables}
    B_0=\frac{1}{\sqrt{3}}(A_0-A_1);\ \ B_1=\frac{1}{\sqrt{3}}(A_1-A_2);\ \ B_2=\frac{1}{\sqrt{3}}(A_2-A_0)
\end{eqnarray}

Replacing all $B_b$ from Eq.\eqref{Bob's Observables} into Eq.\eqref{Condition of Bob's Observables} and after rearranging we obtain,
\begin{eqnarray}
    (\alpha_0-\alpha_2)A_0+(\alpha_1-\alpha_0)A_1+ (\alpha_2-\alpha_1)A_2=0 \nonumber
\end{eqnarray}
Right away, substituting $A_2=-A_0-A_1$ and multiplying $A_0$ from the right side followed by multiplying the same $A_0$ from the left and then summing both equations we get,
\begin{eqnarray}\label{A12}
    2\mathds{I}(\alpha_0+\alpha_1-2\alpha_2)+\{A_0,A_1\}(2\alpha_1-\alpha_0-\alpha_2)=0 
    \end{eqnarray}
Since $\langle\{A_0 ,A_1\}\rangle=-1$, it follows from Eq.\eqref{A12} that
 $\alpha_0=\alpha_2$. Similarly, substituting $A_0=-A_2-A_1$ and multiplying $A_1$ from the right side followed by multiplying the same $A_1$ from the left and then summing both equations and then using the relation $\langle\{A_2,A_1\}\rangle=-1$ we find $\alpha_0=\alpha_1$. Consolidating these  relations,   we obtain $\alpha_0=\alpha_1=\alpha_2$. Again we have $\alpha_0+\alpha_1+\alpha_2=2$ that straightforwardly indicates $ \alpha_0=\alpha_1=\alpha_2=\frac{2}{3}$. Hence, a suitable choice of states and observables which satisfy these relations can give maximum success probability of the game as $\mathcal{P}^{max}_{Q}=\frac{1}{3}\left(1+ \frac{\sqrt{3}}{2} \right)$.

 One of the examples that satisfy the required properties for maximum success probability is given here: Suppose Alice chooses $A_x= \cos\theta_x \sigma_Z+ \sin\theta_x \sigma_X $ with $\theta_x=\frac{2\pi x}{3}$. 
 From Eq.\eqref{Bob's Observables} we obtain $B_b=\cos\theta_b \sigma_Z- \sin\theta_b \sigma_X $ where $\theta_b=\frac{2\pi}{3}(\frac{1}{4}+b)$. After sating $\alpha_0=\alpha_1=\alpha_2=\frac{2}{3}$ simple calculation of $\mathcal{P}_Q$ gives $\mathcal{P}^{max}_{Q}=\frac{1}{3}\left(1+ \frac{\sqrt{3}}{2} \right)$.

 \section{Proof of Theorem 1: Derivation of maximum success probability in a noncontextual model}\label{AppB}
In Lemma \textbf{2} of the main text, we showed that an ontic state $\lambda$ can strictly be in support of two out of three response functions, $\{\xi(b|E_b,\lambda)\}$ with $b=\{0,1,2\}$. For our purpose, we then divide  the ontic state space ($\Lambda$) into three regions $\Lambda_0,\Lambda_1$ and $\Lambda_2$ depending on the supports of response functions, as depicted in Figure \textbf{1} of the main text. For example, in region $\Lambda_0$ ,  $\xi(0|E_0,\lambda),\xi(2|E_2,\lambda)>0$, but $\xi(1|E_1,\lambda)=0$. Similar arguments hold for the other two regions, as shown in Figure \textbf{1}. Hence, the success probability in Eq.(8) in an ontological model can then be written as,
\ba
    \mathcal{P}_{ont}&&=\frac{1}{6}\sum_{\Lambda_0}\biggr[\mu_{00}\left(\lambda\right)\xi(0|E_0,\lambda)+ \mu_{20}\left(\lambda\right)\xi(2|E_2,\lambda)+ \mu_{11 }\left(\lambda\right)\xi(2|E_2,\lambda)+ \mu_{21}\left(\lambda\right)\xi(0|E_0,\lambda)\biggr] \nonumber\\
    &&+\frac{1}{6}\sum_{\Lambda_1}\biggr[\mu_{00}\left(\lambda\right)\xi(0|E_0,\lambda) +\mu_{10}\left(\lambda\right)\xi(1|E_1,\lambda)\nonumber+\mu_{21}\left(\lambda\right)\xi(0|E_0,\lambda)+\mu_{01}\left(\lambda\right)\xi(1|E_1,\lambda) \biggr] \\
    &&+ \frac{1}{6}\sum_{\Lambda_2}\biggr[\mu_{10}\left(\lambda\right)\xi(1|E_1,\lambda)+\mu_{20}\left(\lambda\right)\xi(2|E_2,\lambda)+\mu_{01}\left(\lambda\right)\xi(1|E_1,\lambda)+\mu_{11}\left(\lambda\right)\xi(2|E_2,\lambda)\biggr] 
\ea
Our task is to maximize $\mathcal{P}_{ont}$ in a noncontextual ontological model, $i.e.$,
\ba
\mathcal{P}_{NC}=\max_{\mu_{xa}(\lambda),\xi(b|E_b,\lambda)} \mathcal{P}_{ont} 
\ea

From Proposition \textbf{2} of the main text we have $0\leq\xi(b|E_b,\lambda)\leq \alpha_b $ with $\sum\limits_{b}\alpha_b =2$. To maximize $P_{ont}$ for the region $\Lambda_0$, we choose the maximum value $\xi(0|E_0,\lambda)=\alpha_0$ and therefore $\xi(2|E_2,\lambda)=1-\alpha_0$. Similarly, for the region $\Lambda_1$ we choose $\xi(1|E_1,\lambda)=\alpha_1$ and then $\xi(0|E_0,\lambda)=1-\alpha_1$, and for the region $\Lambda_2$ we choose $\xi(2|E_2,\lambda)=\alpha_2$ and then $\xi(1|E_1,\lambda)=1-\alpha_2$. One could take any other possible values, but due to the inherent symmetry of the distribution, they will lead equivalent result. We then have,
\ba\label{General NC Success Prob}
    \mathcal{P}_{ont}&&\leq \frac{1}{6}\sum_{\Lambda_0}\alpha_0\left(\mu_{00}\left(\lambda\right)+\mu_{21}\left(\lambda\right) \right) + (1-\alpha_0)\left(\mu_{11}\left(\lambda\right)+\mu_{20}\left(\lambda\right) \right)+ \frac{1}{6}\sum_{\Lambda_1}\alpha_1 \left(\mu_{10}\left(\lambda\right)+\mu_{01}\left(\lambda\right) \right) + (1-\alpha_1)\left(\mu_{21}\left(\lambda\right)+\mu_{00}\left(\lambda\right) \right) \nonumber \\
    &&+ \frac{1}{6}\sum_{\Lambda_2}\alpha_2\left(\mu_{20}\left(\lambda\right)+\mu_{11}\left(\lambda\right) \right) + (1-\alpha_2)\left(\mu_{01}\left(\lambda\right)+\mu_{10}\left(\lambda\right) \right)
\ea 

Note that to satisfy the parity
concealment constraints on Alice's preparation in quantum theory, one needs to satisfy Eq.\eqref{P1} and Eq.\eqref{P2} as a rule of the game. Equivalently, in an noncontextual model corresponding to $\rho_1$ and $\rho_2$ we assign their respective ontic state distributions as $\gamma_1(\lambda)$ and $\gamma_2(\lambda)$, so that

\begin{minipage}{0.45\textwidth}
  \begin{align}
    \gamma_2\left(\lambda\right)&=\frac{1}{3}\left(\mu_{00}\left(\lambda\right) +\mu_{10}\left(\lambda\right)+ \mu_{20}\left(\lambda\right) \right)\label{NC4}\\
    &=\frac{1}{3}\left(\mu_{11}\left(\lambda\right) +\mu_{21}\left(\lambda\right)+ \mu_{01}\left(\lambda\right) \right )\label{NC5}
    \end{align}
\end{minipage}
\begin{minipage}{0.45\textwidth}
    \begin{align}
        \gamma_1\left(\lambda\right)&=\frac{1}{2}\left(\mu_{00}\left(\lambda\right) +\mu_{11}\left(\lambda\right) \right )\label{NC1}\\
        &=\frac{1}{2}\left(\mu_{10}\left(\lambda\right) +\mu_{21}\left(\lambda\right) \right )\label{NC2}\\
        &=\frac{1}{2}\left(\mu_{20}\left(\lambda\right) +\mu_{01}\left(\lambda\right) \right )\label{NC3}
    \end{align}
\end{minipage}

We first derive the maximum noncontextual bound for a special case $\alpha_b=\frac{2}{3} \ \forall b$ followed by the derivation for any arbitrary value of $\alpha_b$.

\subsection{Noncontextual bound for a specific case of $\alpha_b=\frac{2}{3} \ \forall b$}\label{alphab_b=0.5}

As we are interested in obtaining the noncontextual bound corresponding to the optimal quantum strategy, we take $\alpha_b=\frac{2}{3}$ and $\gamma_1(\lambda)=\gamma_2(\lambda)=\mu_{\frac{\mathds{I}}{2}}(\lambda)$ as $\rho_1=\rho_2=\frac{\mathds{I}}{2}$ for $\alpha_b=\frac{2}{3} \ \forall b$.
By replacing $\mu_{01}(\lambda),\mu_{11}(\lambda)$ and $\mu_{21}(\lambda)$ from Eqs.\eqref{NC1},\eqref{NC2} and \eqref{NC3} respectively into Eq.\eqref{General NC Success Prob}, and further simplifying, we obtain,
\ba
    \mathcal{P}_{ont} \leq && \frac{1}{6}\Bigg[\frac{1}{3}\sum_{\Lambda_0}\left(\mu_{00}\left(\lambda\right)+\mu_{20}\left(\lambda\right) \right) +2\sum_{\Lambda_0}\mu_{\frac{\mathds{I}}{2}}\left(\lambda\right)
    + \frac{1}{3}\sum_{\Lambda_1}\left(\mu_{10}\left(\lambda\right)+\mu_{00}\left(\lambda\right) \right) \nonumber\\ &&+2\sum_{\Lambda_1}\mu_{\frac{\mathds{I}}{2}}\left(\lambda\right) +\frac{1}{3}\sum_{\Lambda_2}\left(\mu_{20}\left(\lambda\right)+\mu_{10}\left(\lambda\right) \right) +2\sum_{\Lambda_2}\mu_{\frac{\mathds{I}}{2}}\left(\lambda\right)\Bigg]
\ea
Now, by using Eqs.(\ref{NC4}) and (\ref{NC5}) and by noting the fact that $\sum_{\Lambda}\mu_{\frac{\mathds{I}}{2}}(\lambda)=1$, we get,
\ba
\mathcal{P}_{ont}&& \leq \frac{1}{6}\Bigg[2+\frac{1}{3}\sum_{\Lambda_0}\left(3\mu_{\frac{\mathds{I}}{2}}(\lambda)-\mu_{10}(\lambda)\right)+\frac{1}{3}\sum_{\Lambda_1}\left(3\mu_{\frac{\mathds{I}}{2}}(\lambda)-\mu_{20}(\lambda)\right) +\frac{1}{3}\sum_{\Lambda_2}\left(3\mu_{\frac{\mathds{I}}{2}}(\lambda)-\mu_{00}(\lambda)\right)\Bigg]
\ea
For maximization, we have to take $\sum\limits_{\Lambda_0}\mu_{10}(\lambda)=0$,$\sum\limits_{\Lambda_1}\mu_{20}(\lambda)=0$ and $\sum\limits_{\Lambda_2}\mu_{00}(\lambda)=0$ which gives 

\ba
\mathcal{P}_{NC} \leq  \frac{1}{6}\Bigg[2+\sum_{\Lambda}\mu_{\frac{\mathds{I}}{2}}(\lambda)\Bigg]=\frac{1}{2}
\ea

\subsection{The noncontextual bound for any arbitrary value of $\alpha_b$}\label{arbitary alpha}

To calculate the upper bound to the success probability $\mathcal{P}_{NC}$ of the noncontextual model, we relabel the terms of Eq.\eqref{General NC Success Prob} as,
$\sum \limits_{\Lambda_0}\mu_{xa}\left(\lambda\right)\equiv A_{xa},\sum \limits_{\Lambda_1}\mu_{xa}\left(\lambda\right)\equiv B_{xa},\sum \limits_{\Lambda_2}\mu_{xa}\left(\lambda\right)\equiv C_{xa}, \sum \limits_{\Lambda_i}\gamma_1(\lambda)\equiv p_i,\sum \limits_{\Lambda_i}\gamma_2(\lambda)\equiv q_i$. Thus, the problem reduces to an linear optimization problem of the functional
\begin{equation}
    \mathcal{P}_{ont} \leq \frac{1}{6}\Big[\alpha_0(A_{00}+A_{21} )+(1-\alpha_0)(A_{11}+A_{20})+\alpha_1(B_{10}+B_{01}+(1-\alpha_1)(B_{21}+B_{00})+\alpha_2(C_{20}+C_{11}+(1-\alpha_2)(C_{01}+C_{10}))\Big]
\end{equation}

 with the constraints from Eqs.\eqref{NC1}, \eqref{NC2}, and \eqref{NC3} as, 
 \ba
 &&A_{00}+A_{11}=2p_0, \ \ \ \ 
 A_{21}+B_{21}-C_{10}=1-2p_2, \ \ \ \ 
 A_{20}-B_{01}+C_{20}=1-2p_1\\
&&B_{10}+B_{21}=2P_1, \ \ \ \
 A_{00}+B_{00}-C_{11}=1-2p_2, \ \ \ \
 -A_{20}+B_{01}+C_{01}=1-2p_0\\
 &&C_{20}+C_{01}=2p_2, \ \ \ \
 A_{11}-B_{00}+C_{11}=1-2p_1, \ \ \ \
 -A_{21}+B_{10}+C_{10}=1-2p_0
 \ea
 and the constraints from Eqs.\eqref{NC4} and \eqref{NC5},
 \ba
 &&A_{00}+A_{20}-B_{10}-C_{10}=3q_0-1, \ \  \ A_{11}+A_{21}-B_{01}-C_{01}=3q_0-1,\\
 &&B_{00}+B_{10}-C_{20}-A_{20}=3q_1-1, \ \ \
 B_{21}+B_{01}-A_{11}-C_{11}=3q_1-1,\\
 &&C_{10}+C_{20}-B_{00}-A_{00}=3q_2-1, \ \ \
 C_{01}+C_{21}-B_{11}-A_{11}=3q_2-1
 \ea
 Also,  
\ba
p_0+p_1+p_2=1; \ \ q_0+q_1+q_2=1; \ \ 
\alpha_0+\alpha_1+\alpha_2=2
\ea
We perform a straightforward numerical optimization using \textcolor{blue}{$Mathemetica\ 13.2$} which yields the maximum value  $\mathcal{P}_{NC}=0.58$ for any arbitrary value of $\alpha_b$.
In Figure \textbf{2} of the main text,  we plot maximum quantum and noncontextual  success probabilities against $\alpha_0$ by taking $\alpha_1=\alpha_2$ .

\section{Proof of Lemma \textbf{3}} \label{Appendix Simulation by trine}

    Consider a $n$-faced unbiased coin with outcome $o\in \{0,1,...,n-1\}$, where $n$ is arbitrary, occurring with $\frac{1}{n}$ probability . Each outcome $o$ corresponds to three-outcome qubit POVMs $\mathbb{M}^o \equiv \Big\{E^o_o,E^o_{o\oplus_n\frac{n-1}{2}},E^o_{o\oplus_n\frac{n+1}{2}}\Big\}$, where $E^o_o=\frac{2\cos{\frac{\pi}{n}}}{1+\cos{\frac{\pi}{n}}} \pi_o, E^o_{o\oplus_n\frac{n-1}{2}}=\frac{1}{1+\cos{\frac{\pi}{n}}} \pi_{o\oplus_n\frac{n-1}{2}}, E^o_{o\oplus_n\frac{n+1}{2}}=\frac{1}{1+\cos{\frac{\pi}{n}}}\pi_{o\oplus_n \frac{n+1}{2}}$. It is crucial to note that each $\pi_o$ occurs three times in the measurement contexts $\mathbb{M}^o,\mathbb{M}^{o \oplus_n \frac{n-1}{2}}$ and $\mathbb{M}^{o \oplus_n \frac{n+1}{2}}$ with coefficients $\frac{2\cos{\frac{\pi}{n}}}{1+\cos{\frac{\pi}{n}}}$, $\frac{1}{1+\cos{\frac{\pi}{n}}}$ and $\frac{1}{1+\cos{\frac{\pi}{n}}}$, respectively. The coefficients then sum up to $2$, and since the probability of performing $\mathbb{M}^o$ is $\frac{1}{n}$, the coefficient $\frac{2}{n}$ of each $\pi_k$ of POVM $E_k$ is obtained. Therefore, instead of measuring $\mathbb{M}_n$ if one measures $\mathcal{S}_n\equiv\{M^o\}$ where $o\in \{0,1,...,n-1\}$ occurs with $\frac{1}{n}$, one gets the same statistics as $\mathbb{M}_n$. 

\subsection{Simulation of five-outcome POVM of  Selby \emph{et al.,}\cite{Selby2023} by five three-outcome extremal qubit POVMs}\label{Appendix Simulation of Selby's Measurement}
To show contextuality without incompatibility \cite{Selby2023}, authors considered a five-outcome measurement which is a special case of our Lemma \textbf{3} of the main text where measurement is given by, $\mathbb{M}_{5}\equiv\{E_k=\frac{2}{5}\pi_k\}$, here $\pi_k= \frac{1}{2}\left( \mathds{I}+ \cos{(\frac{2\pi}{5}k)}\sigma_z + \sin{(\frac{2\pi}{5} k)}\sigma_x\right):k=0,1,2,3,4$.
So, using our Lemma \textbf{3}, a particular set of the trine POVMs $\mathcal{S}_M\equiv\{M^0,M^1,M^2,M^3,M^4\}$, can simulate measurement $\mathbb{M}_{5}$, where,
\ba 
M^0\equiv \{   E_0^0=h_0\pi_0, 
      E_2^0=h_1\pi_2,
      E_3^0=h_1\pi_3\}\\
    M^1 \equiv \{E_1^1=h_0\pi_1,
    E_3^1=h_1\pi_3,
    E_4^1=h_1\pi_4\}\\
        M^2 \equiv \{E_2^2=h_0\pi_2, 
    E_4^2=h_1\pi_4,   E_0^2=h_1\pi_0\}\\
    M^3\equiv \{ E_3^3=h_0\pi_3,
    E_0^3=h_1\pi_0,
    E_1^3=h_1\pi_1\}  \\
    M^4\equiv \{    E_4^4=h_0\pi_4,
    E_1^4=h_1\pi_1,
    E_2^4=h_1\pi_2\}
    \ea 
With, $h_0=\frac{2\cos{\frac{\pi}{5}}}{1+\cos{\frac{\pi}{5}}}$ and $h_1=\frac{1}{1+\cos{\frac{\pi}{5}}}$.

Instead of performing $\mathbb{M}_{5}$, if one takes a five-face unbiased coin and tosses it and if he gets $o$ he measures $M^o \ \forall o \in \{0,1,2,3,4\}$. After his measurements $\{M^o\}$ if he gets $E_k^o$, he reports final output as $E_k$ for the measurement $\mathbb{M}_{5}$. So, probability of getting each $o$ is $\frac{1}{5}$ and probability of getting $E_k$ is,
\ba
p(k|\rho,\mathbb{M})&&=\sum_o p(o) \Tr[E_k^o \rho]=\frac{1}{5}\Tr[\sum_o E_k^o \rho]=\Tr[E_k \rho]
\ea
So, without measuring $\mathbb{M}_{5}$ one can get the same result from post-processing of a set of measurements $\mathcal{S}_M$. We show in Appendix \ref{Incompatibility of S_M} that the set $\mathcal{S}_M$ is incompatible.

\subsection{Proof of measurement incompatibility of any two measurements in the set \texorpdfstring{$\mathcal{S}_M$}{S	extunderscore M}}\label{Incompatibility of S_M}
Compatibility of the set $\mathcal{S}_M$ implies all measurements in this set can be jointly measured. If we can show that any pair from this set is not jointly measurable, it is sufficient to claim that $\mathcal{S}_M$ is \emph{incompatible}. We use the result of Carmeli $\it{et\  al.}$ \cite{CarmeliIW2019,Heinosaari2018} to examine the incompatibility of two measurements $M^0$ and $M^1$ . Their result when mapped to the scenario of our work can be described as follows. As argued in \cite{CarmeliIW2019,Heinosaari2018}, to test the incompatibility of two measurements, it is sufficient to study the state discrimination problem with pre- and post-measurement information. 

Suppose, Alice prepares a state ensemble $\mathcal{E}=\{e_0,e_1,e_2,e'_0,e'_1,e'_2\}$ from a uniformly random distribution and sends to Bob and his task is to discriminate between the states. Also, consider that there are two partitions, $\mathcal{E}_0=\{e_0,e_1,e_2\}$ and $\mathcal{E}_1=\{e'_0,e'_1,e'_2\}$. If before Bob's measurement Alice informs him about the partition from which the state is chosen, then the average success probability for this state discrimination task with prior information is defined as $P^{prior}_{guess}(\mathcal{E};M^0,M^1)$, where $M^0$ and $ M^1$ are the choice of Bob's POVM for the partitions  $\mathcal{E}_0$ and $\mathcal{E}_1$ respectively. 
Now, consider another scenario where Alice sends the state to Bob, and after his measurement is performed, she informs him the information about the partitions. In such a case protocol, the maximum average success probability of the state discrimination with post-information of the partition is defined as $P^{post}_{guess}(\mathcal{E})$.

According to ${Theorem \ 2}$ of \cite{CarmeliIW2019}, two measurements $M^0$ and $M^1$ are incompatible \emph{if and only if} there exists a partitioned ensemble $\mathcal{E}$ such that $P^{prior}_{guess}(\mathcal{E};M^0,M^1)>P^{post}_{guess}(\mathcal{E})$. For our purpose, let us consider the state ensemble $\mathcal{E}$ as follows.

\ba 
\mathcal{E}_0: &&\ \ e_0= \frac{\mathds{I}+\sigma_z}{2}; 
    e_1= \frac{\mathds{I}-\cos{\frac{\pi}{5}}\sigma_z+\sin{\frac{\pi}{5}}\sigma_x}{2} ;
    e_2=\frac{\mathds{I}-\cos{\frac{\pi}{5}}\sigma_z-\sin{\frac{\pi}{5}}\sigma_x}{2}\\
\mathcal{E}_1:&& \ \ e'_0= \frac{\mathds{I}+\cos{\frac{2\pi}{5}}\sigma_z+\sin{\frac{2\pi}{5}}\sigma_x}{2};
    e'_1= \frac{\mathds{I}-\cos{\frac{\pi}{5}}\sigma_z-\sin{\frac{\pi}{5}}\sigma_x}{2};
    e'_2= \frac{\mathds{I}+\cos{\frac{2\pi}{5}}\sigma_z-\sin{\frac{2\pi}{5}}\sigma_x}{2}
    \ea

It is straightforward to understand that with the prior information of $\mathcal{E}_0$ and $\mathcal{E}_1$ the optimal strategy will be to use the measurement $M^0$ for $\mathcal{E}_0$ and $M^1$ for $\mathcal{E}_1$. Therefore, the average success probability of guessing the states is, 
\begin{eqnarray}
    P^{prior}_{guess}=\frac{1}{6}\left( \Tr[e_0 E_0+ e_1 E_2+ e_2 E_3] + \Tr[e'_0 E_1+e'_1 E_3 +e'_2 E_4] \right)=\frac{2}{3}
\end{eqnarray}

On the other hand using post-measurement information \cite{Heinosaari2018}, to calculate the maximum success probability of state discrimination, one of the approaches is to make a new state ensemble $\mathcal{F}$ where one considers all possible combinations of states between the two partitions $\mathcal{E}_0$ and $\mathcal{E}_1$ followed by discriminating those states of the new ensemble. In this scenario the state ensemble $\mathcal{F}$ is given by,
\begin{eqnarray}\label{New-state-ensemble}
    \mathcal{F}=\{\mathcal{F}_{i,j}\}:\mathcal{F}_{ij}=\frac{e_i+e'_j}{2}\ \ \forall e_i\in \mathcal{E}_1,e_j \in \mathcal{E}_2
\end{eqnarray}

We calculate the sum of all the maximum eigenvalues of $\mathcal{F}_{i,j}$, which turns out to be $\iota^{max}=0.943$. Therefore, maximum guessing probability of state discrimination of the ensemble $\mathcal{F}$ is,
\begin{eqnarray}
    P_{guess}(\mathcal{F})\leq\frac{2}{9} \iota^{max}
\end{eqnarray}
It is crucial to note that  each state ($e_i$ or $e'_j$) of a partition ($\mathcal{E}_0\  or\ \mathcal{E}_1$) contributes three times in the state ensemble $\mathcal{F}$. Now, with post-measurement information of partition,  optimal guessing probability of state discrimination of the ensemble $\mathcal{E}$ would be $P^{post}_{guess}(\mathcal{E})=3 P_{guess}(F)$. Finally, we have $P^{post}_{guess} \leq  3 \times \frac{2}{9} \times 0.943=0.629$.

It is then evident that for the measurements $M^0$ and $ M^1$,  there exists a state ensemble $\mathcal{E}$ for which $P^{post}_{Guess}(\mathcal{E})<P^{prior}_{guess}(\mathcal{E};M^0,M^1)$. Therefore, directly applying Theorem $\textbf{2}$ of \cite{CarmeliIW2019}, we establish that the measurements $M^0$ and $ M^1$ are incompatible. It is straightforward to show that any two measurements $M^o$ and $M^{o'\neq o}$ in the set $\mathcal{S}_M$ are incompatible.

\section{PROOF OF THEOREM 2: Maximum success probability using cbit communication aided  by unbounded shared randomness}\label{optimization of bit comm}
Let us first analyze a particular strategy when Alice sends one cbit to Bob without using shared randomness. Assume that before starting the game, Alice and Bob agree upon an encoding ($\mathds{E}_1$) and decoding ($\mathds{D}_1$) strategy as given below.\\

\textbf{Encoding strategy ($\mathds{E}_1$)}: For the input $xa$, Alice prepares '$0$' with probability $p_{xa}$ and '$1$' with probability $1-p_{xa}$.\\

 \textbf{Decoding strategy ($\mathds{D}_1$)}: If Bob receives '$0$' he outputs $0$ with probability $s_0$, $1$ with probability $s_1$ and $2$ with probability $s_2$. Similarly, if he receives $"1"$, he outputs $b$ with probability $r_b, \ \forall b\in \{0,1,2\}$ with $\sum_b s_b=\sum_b r_b=1$. An explicit description of the strategy is given in the following table.\\
 
\begin{center}
\begin{tabular}{||c| c c c c c c| c ||} 
 \hline
 Message & 00 & 01 & 10 & 11 & 20 & 21 & Decoding \\ [0.5ex] 
 \hline\hline
 0 & $p_{00}$ & $p_{01}$ & $p_{10}$ &$p_{11}$ & $p_{20}$ & $p_{21}$ & $s_0\rightarrow0$;\ $s_1 \rightarrow 1$;\ $s_2\rightarrow 2$ \\ 
 \hline
 1 & 1-$p_{00}$ & 1-$p_{01}$ & 1-$p_{10}$ & 1-$p_{11}$ & 1-$p_{20}$ & 1-$p_{21}$ & $r_0\rightarrow0$;\ $r_1 \rightarrow 1$;\ $r_2\rightarrow 2$ \\[1ex] 
 \hline 
\end{tabular}
\end{center}
In order to satisfy the restriction Eq.(4a) and Eq.(4b) of the game, Alice's preparation adhere,
 \begin{eqnarray}\label{c1}
&&\frac{1}{3}\left(p_{00}+p_{10}+p_{20}\right)= \frac{1}{3}\left(p_{01}+p_{11}+p_{21}\right)=\kappa_1\\
\label{c2}
&&\frac{1}{2}\left(p_{00}+p_{11}\right)=\frac{1}{2}\left(p_{01}+p_{20}\right)=\frac{1}{2}\left(p_{10}+p_{21}\right)=\kappa_2
\end{eqnarray}
where $\kappa_1, \kappa_2\in [0,1]$. It is simple to check from Eqs. (\ref{c1}) and (\ref{c2}) that $\kappa_1=\kappa_2$.

Now, consider that Alice and Bob share an unbounded shared randomness denoted by the variable $\tau$ with probability $p(\tau)$ and use this correlation to win the game. A general approach could be to fix an encoding strategy ($\mathds{E}_{\tau}$) and decoding strategy ($\mathds{D}_{\tau}$)  corresponding to each $\tau$. Hence, the average success probability ($\mathcal{P}_c$) of the game due to Alice's one $c$-bit communication aided with shared randomness  becomes,
\begin{eqnarray}\label{c3}
\mathcal{P}_c&&=\sum_{\tau}p(\tau)\mathcal{P}(\mathds{E}_{\tau},\mathds{D}_{\tau})\nonumber \\
    &&=\frac{1}{6}\sum_{\tau}p(\tau)([p_{00}(\tau)+p_{21}(\tau)]s_0(\tau) +[p_{11}(\tau)+p_{20}(\tau)]s_2(\tau)+ [p_{01}(\tau)+p_{10}(\tau)]s_1(\tau) \nonumber \\
    &&+  [2-p_{00}(\tau)-p_{21}(\tau)]r_0(\tau) +[2-p_{11}(\tau)-p_{20}(\tau)]r_2(\tau)+ [2-p_{01}(\tau)-p_{10}(\tau)]r_1(\tau))
\end{eqnarray}
where $p_{xa}(\tau)$ is the short-hand notation for the probability $p(0|xa,\tau)$ corresponding to Alice's message $'0'$ for the input $(x,a)$ and the shared variable $\tau$ and similarly $s_{b}(\tau)$ and $r_{b}(\tau)$ are the probabilities of Bob's output $'0'$ and $'1'$ respectively given $\tau$. Using Eq.\eqref{c2} we eliminate $p_{01}(\tau),p_{11}(\tau),p_{21}(\tau)$ and substitute $s_2(\tau)=1-s_1(\tau)-s_0(\tau)$, $r_2(\tau)=1-r_1(\tau)-r_0(\tau)$ in Eq.\eqref{c3}, we finally obtain,
 \begin{equation}
     \mathcal{P}_c=\frac{1}{3} + \frac{1}{6} \sum_{\tau}p(\tau)\Big[\big(2p_{00}(\tau)-p_{10}(\tau)-p_{20}(\tau)\big)\big(s_0(\tau)-r_0(\tau)\big) +  
     \big(p_{10}(\tau)+p_{00}(\tau)-2p_{20}(\tau)\big)\big(s_1(\tau)-r_1(\tau)\big)\Big]
 \end{equation}
Subsequently, using Eq.\eqref{c1} we get,
 \begin{equation}
 \label{cc}
     \mathcal{P}_c=\frac{1}{3}+ \frac{1}{2}\sum_{\tau} p(\tau)\Big[\big(p_{00}(\tau)-\kappa_2(\tau) \big)\big(s_0(\tau)-r_0(\tau)\big) +  \big(\kappa_2(\tau)-p_{20}(\tau)\big)\big(s_1(\tau)-r_1(\tau)\big)\Big]
 \end{equation}
 From Eq. (\ref{cc}), it can be seen that there are four possibilities of values, (I)  $s_0(\tau)\geq r_0(\tau),   s_1(\tau) \geq r_1(\tau) $, (II) $s_0(\tau) \geq r_0(\tau),   s_1(\tau)\leq r_1(\tau) $, (III) $s_0(\tau)\leq r_0(\tau),   s_1(\tau) \geq r_1(\tau) $, and (IV) $s_0(\tau)\leq r_0(\tau),   s_1(\tau)\leq r_1(\tau) $. Let us now analyze the maximum values for all four cases.

For case (I), maximization requires $r_0(\tau)=r_1(\tau)= 0$ and $s_0(\tau)= 1$, leading to $s_1(\tau)=0$. Eq. (\ref{cc}) then reduces to 
\begin{equation}
 \label{cc1}
     \mathcal{P}^{(I)}_c=\sum_{\tau\in (I)} p(\tau)\Big[p_{00}(\tau)-\kappa_2(\tau) \Big]
 \end{equation}
For maximization, we further take $p_{00}(\tau)=1$ and following Eq.\eqref{c2} the minimum value of $\kappa_2(\tau)$ can be $\half$. Putting these together, we get 
\begin{eqnarray}
    \mathcal{P}_c^{(I)}=\sum_{\tau \in (I)}\half p(\tau) 
\end{eqnarray}
Similarly, for other three cases it is straightforward to check that the result remains the same as above $i.e.$, $\mathcal{P}_c^{(II)}=\sum\limits_{\tau \in (II)}\half p(\tau)$, $\mathcal{P}_c^{(III)}=\sum\limits_{\tau \in (III)}\half p(\tau)$ and $\mathcal{P}_c^{(IV)}=\sum\limits_{\tau \in (IV)}\half p(\tau)$. Considering all those relations, we find the maximum value, $(\mathcal{P}_c)^{max}$ of $\mathcal{P}_c$ to be 
\begin{eqnarray}
\left(\mathcal{P}_c\right)^{max}=\frac{1}{3} + \half \sum_{\tau}\half p(\tau) =\frac{7}{12}\approx 0.58
\end{eqnarray} 
Therefore, the maximum success probability achieved for the communication game for Alice's one $c$-bit communication assisted with unbounded shared randomness is $0.58$. Then the maximum success probability (as in Lemma. \textbf{1} of the main text) of the communication game using qubit communication  surpasses the one achieved with $c$-bit, establishing the quantum supremacy in HFW scenario.

\section{Coherence as the non-classical feature of measurements viable for contextuality without incompatibility}\label{Appendix Coherence}

The resource theory of operations based on their ability to detect coherence of quantum states was proposed in \cite{plenioprl19}. Since coherence is a basis-dependent quantity, before defining the properties of POVMs that are capable of detecting the coherence of states, let us fix the basis states as $\{\ket{\zeta_{i}}\}$. Any state that is a convex mixture of those basis states is called incoherent states and represented as,

\begin{equation}
    \rho_{inc}=\sum_{i}p_{i} \ketbra{\zeta_{i}}{\zeta_{i}}, \ \ where, \sum_{i}p_{i}=1.
\end{equation}
The incoherent states are produced by completely dephasing operation as $\rho_{inc}= \Lambda (\rho)$, where $\Lambda$ is a coherence destroying map. Now, a POVM $\{E_{n}^{free}\}$ is said to be free if and only if it cannot detect the coherence of any state such that $tr[\Lambda(\rho)E_{n}^{free}]=tr[\rho E_{n}^{free}] \ \forall \rho, n $. It is argued in \cite{plenioprl19} that the form that is necessary and sufficient for a POVM to be free is

\begin{equation}
\label{eq:incoherent}
  E_{n}^{free}= \sum_{i} \alpha_{n,i} \ketbra{\zeta_{i}}{\zeta_{i}} \ \ \forall n
\end{equation}

Keeping these definitions in mind, let us now consider the POVM $(\mathbb{M})$ used in this paper, whose elements are of the form $E_{b}=\alpha_{b}\ketbra{\psi_b}{\psi_b}$, with $b\in\{0,1,2\}$ and $\sum_{b}\alpha_{b}=2$. In order to examine whether $E_{b}$ can be written in the form of Eq. \eqref{eq:incoherent} we fix a particular basis as $\{\ket{\psi_{0}},\ket{\psi_{0}^{\perp}}\}$. It is straightforward to see that only at the extreme cases of $\alpha_{0}=0$  and $\alpha_{0}=1$, the POVM $E_{b}$ can be written in the form of Eq.\eqref{eq:incoherent}. However, these two extreme cases are trivial as they represent the two-outcome sharp qubit measurements. For any other values of $\alpha_b$, the measurement remains a three-outcome qubit measurement and is not a free POVM. Therefore, we argue that in the absence of measurement incompatibility, the ability of our three-outcome qubit POVM to detect the state coherence, is the non-classicality  that provides the quantum advantage. 

\end{widetext}

\end{document}